\newcolumntype{L}[1]{>{\raggedright\arraybackslash}p{#1}}
\newcolumntype{C}[1]{>{\centering\arraybackslash}m{#1}}
\newcolumntype{R}[1]{>{\raggedleft\arraybackslash}p{#1}}
\newtheorem{theorem}{Theorem}[section]
\newtheorem{claim}[theorem]{Claim}
\newtheorem{corollary}[theorem]{Corollary}
\newtheorem{lemma}[theorem]{Lemma}
\newtheorem{definition}[theorem]{Definition}
\newtheorem{conjecture}[theorem]{Conjecture}
\newtheorem{proposition}[theorem]{Proposition}
\DeclareMathOperator*{\E}{\mathbb{E}}
\newcommand{\Braket}[1]{\langle#1\rangle}
\newcommand{\poly}{\text{poly}}
\newcommand{\W}{\bold{W}}
\newcommand{\I}{\bold{I}}
\newcommand{\N}{\bold{N}}
\newcommand{\cC}{\mathcal{C}}
\newcommand{\cF}{\mathcal{F}}
\newcommand{\floor}[1]{\lfloor#1\rfloor}
\newcommand{\warn}[1]{#1}
\title{\warn{New Distinguishers for Negation-Limited Weak Pseudorandom Functions}}
\date{}
\author{Zhihuai Chen}
\author{Siyao Guo}
\author{Qian Li}
\author{Chengyu Lin}
\author{Xiaoming Sun}
\address[Zhihuai Chen, Qian Li, Xiaoming Sun]{Institute of Computing Technology, Chinese Academy of Sciences.
\textnormal{E-mail: \url{chenzhihuai@ict.ac.cn}, \url{liqian@ict.ac.cn}, and \url{sunxiaoming@ict.ac.cn}}.}
\address[Siyao Guo]{New York University, Shanghai \textnormal{E-mail: \url{siyao.guo@nyu.edu}}.}
\address[Chengyu Lin]{Columbia University \textnormal{E-mail: \url{chengyu@cs.columbia.edu}}.}
\begin{document}

\maketitle

\begin{abstract}
We show how to distinguish circuits with $\log k$ negations (a.k.a $k$-monotone functions) from uniformly random functions in  $\exp\big(\Tilde{O}\big(n^{1/3}k^{2/3}\big)\big)$ time using random samples. The previous best distinguisher, due to the learning algorithm by Blais, Cannone, Oliveira, Servedio, and Tan (RANDOM'15), requires $\exp\big(\Tilde{O}(n^{1/2} k)\big)$ time.

Our distinguishers are based on Fourier analysis on \emph{slices of the Boolean cube}. We show that some ``middle'' slices of negation-limited circuits have strong low-degree Fourier concentration and then we apply a variation of the classic Linial, Mansour, and Nisan ``Low-Degree algorithm'' (JACM'93) on slices. Our techniques also lead to a slightly improved weak learner for negation limited circuits under the uniform distribution.
\end{abstract}

\section{Introduction}

One significant goal in the area of cryptography is to understand how simple cryptography can be.  This motivates the study of low complexity cryptography which explores the possibility of implementing cryptographic primitives in low complexity classes. This line of research inherently lies at the intersection of computational complexity and cryptography. It links core problems in both areas and has become an essential source of new perspectives for both areas.

In this work, we continue this line of research and focus on \emph{pseudorandom functions} (PRFs) in \emph{negation-limited computation}. We start by introducing pseudorandom functions and negation-limited computation before connecting them to explain the main motivation of our work.

\vspace{2ex}
\noindent\textbf{Pseudorandom functions.} Pseudorandom functions (PRFs)~\cite{GGM} are fundamental primitives in symmetric cryptography. In particular, they yield direct solutions to most central goals of symmetric cryptography, such as encryption, authentication and identification. They are well studied in the theoretical community and widely used in practice.

As lightweight (computationally limited) devices become popular, the efficiency of cryptographic implementations
also become increasingly significant.  To obtain a better tradeoff between efficiency and security, a weaker notion of PRFs called \emph{weak pseudorandom functions} (See Definition~\ref{def:wprfs}) has been considered. A distinguisher for a weak PRFs aims to distinguish a random member of the family from a truly random function after observing a number of random samples $(x_1,f_s(x_1)),\dots, (x_m,f_s(x_m))$ where $x_1,\dots,x_m$ are independent uniformly random strings from $\{0,1\}^n$ and $f_s:\{0,1\}^n\rightarrow\{0,1\}$ is the function in question. Weak PRFs suffice for many key applications such as encryption and authentication in symmetric cryptography. More importantly, weak PRFs may allow for significant gains in efficiency.  Akavia et al.~\cite{ABGKM14} pointed out weak PRFs have the potential to bypass the limitations of PRFs in low depth circuits. In particular, they provided candidate weak PRFs in a class of low depth circuits where PRFs provably cannot exist. This raises the following natural questions.
\begin{quote} {\it
		Can weak PRFs bypass the limitations of PRFs in other low complexity classes? }
\end{quote}

Besides cryptography, another important motivation for the study of low complexity PRFs comes from explaining the difficulties of obtaining circuit lower bounds and learning algorithms. We refer interested readers to the survey by Bogdanov and Rosen~\cite{BR17}.

\vspace{2ex}
\noindent\textbf{Negation-Limited Computation.} The power of negations is a mystery in complexity theory. One of the main difficulties in proving lower bounds on circuit size using AND, OR, NOT gates is the presence of negation gates: the best such lower bound is linear, whereas if no negation gates are allowed, exponential lower bounds are known~\cite{razborov1985lower,andreev1985method,alon1987monotone,tardos1988gap,berg1999symmetric,harnik2000higher}. In 1958, Markov~\cite{Mark} observed that every Boolean (even multiple-output) function of $n$ variables can be computed by a circuit with only $\log{n}$ negation gates.  In other words, the potential, possibly exponential, gap between monotone computation and non-monotone computation exists due to as few as $\log{n}$ negations.

Besides circuit complexity, the divide between monotone and non-monotone computation exists in general:  while we usually have a fairly good understanding of the monotone case, many things may fail to hold when negation gates are allowed. Aiming at bridging the gap between monotone and non-monotone computation, a body of recent work studies negation-limited computation from multiple angles including learning~\cite{BCOST15}, cryptography~\cite{GMOR15}, Boolean formulas~\cite{GK15,Ros15}, property testing~\cite{CGGKW17,GKW19}, Boolean function conjectures~\cite{LZ17}. Although the above works extend many results in monotone cases to as many as $O(\log n)$ negations, they also leave open several surprisingly basic questions about a single negation ranging from weak learning algorithms to the structure of their Fourier spectrum. More surprisingly, in the context of property testing, a single negation can be exponentially harder than the monotone case~\cite{CGGKW17,GKW19}.  Our understanding of a single negation remains largely a mystery.

When the circuit size is not of interest, the classes of circuits with $\log k$ negations are captured by the class of so-called $k$-monotone functions where each function in the family can be written as the parity of $k$ monotone function (See Section~\ref{sec:k-mon}). To simplify the presentation, we will use $k$-monotone functions instead of circuits with $\log{k}$ negations in some of our discussions.

\vspace{2ex}
\noindent\textbf{PRFs in Negation-Limited Computation.}  Can pseudorandom functions be computed by a few negations? For pseudorandom functions, we have a fairly good understanding. Guo et al.~\cite{GMOR15} showed that PRFs are inherently highly non-monotone and require $\log{n}-O(1)$ negations, which is optimal up to an additive constant. However, the answer to weak PRFs is unsatisfying. Guo et al.~\cite{GMOR15} observed that weak PRFs cannot be monotone due to the weak learner for monotone functions by Blum et al.~\cite{BABL98}. For general $k$, the best distinguisher, due to Blais et al.~\cite{BCOST15}, runs in time $n^{O(k\sqrt{n})}$.  Therefore even for a single negation (\emph{i.e.}, $k=2$), the best distinguisher runs in time $n^{O(\sqrt{n})}$.

The above results demonstrate two strong separations. In negation-limited computation, weak PRFs have the potential to be much simpler than PRFs: even a single negation may have $n^{O(\sqrt{n})}$ hardness whereas PRFs cannot exist. From the angle of weak PRFs, the hardness gap between even a single negation and monotone can be as large as $n^{O(\sqrt{n})}$.  These separations are our main motivation to connect them together to study negation-limited weak PRFs. 

\subsection{Our Results}
Before presenting our main results, we define \emph{weak pseudorandom functions} and \emph{weak learning under uniform distribution}.
\begin{definition}[Weak Pseudorandom Functions]\label{def:wprfs}
Let $S$ be a distribution over $\{0,1\}^m$ and $\{F_s:\{0,1\}^n\rightarrow\{0,1\}\}$ be a family of functions indexed by string $s$ in the support of $S$. We say $\{F_s\}$ is a $(c,\epsilon)$-secure weak pseudorandom functions (wPRFs) if for every (non-uniform) algorithm $D$ that can be implemented by a circuit of size at most $c$,
\begin{align}\label{ineq:def1}
\abs{\Pr_s[D^{F_s}\mbox{ accepts}]-\Pr_R[D^R\mbox{ accepts}]}\leq \varepsilon,
\end{align}
where $s$ is distributed according to $S$, $R$ is a function sampled uniformly at random from the set of all functions from $\{0,1\}^n$ to $\{0,1\}$, and $D^h$ denotes the execution of $D$ with random oracle access to a Boolean function $h:\{0,1\}^n\rightarrow\{0,1\}$. In other words, the distinguisher $D^h$ only has access to random examples of the form $(x,h(x))$ where $x$ is uniformly distributed over $\{0,1\}^n$. \warn{The two probabilities in (\ref{ineq:def1}) are both also over the random samples $x$'s.}
\end{definition}

\begin{definition}[Weak Learning under Uniform Distribution]
	We say that an algorithm $A$ weakly learns a family of Boolean functions $\cF$ if $A$ can only access uniformly random samples, and $\forall f \in \cF$ it outputs a hypothesis $h$ such that with high probability \warn{(over the random samples and the randomness of $A$)}
	\[ \Pr_{x \sim U_{\{0, 1\}^n}}\left[ f(x) \neq h(x) \right] \le \frac{1}{2}-\frac{1}{\poly(n)}, \]
where $U_{\{0,1\}^n}$ is the uniform distribution over $\{0,1\}^n$.
\end{definition}

A weak learner works slightly better than random guessing. But from this small advantage, if it's non-negligible,
one can naturally derive an efficient distinguisher against random function.
Any weak learner explicitly gives an attack on the weak pseudorandom functions candidates.
Conversely, weak pseudorandom functions are hard to learn.
Our main result is new distinguishers for negation-limited weak pseudorandom functions. Our results hold for inefficient circuits and are stated in terms of $k$-monotone functions.

\begin{restatable}{theorem}{distinguisher}\label{thm:distinguisher}
Any family of $k$-monotone functions can be distinguished from uniformly random functions in $\exp\left(O(n^{1/3}(k\log n)^{2/3})\right)$ time.
In other words, any family of $k$-monotone functions is not a
	  $\left(\exp\left(O(n^{1/3}(k\log n)^{2/3})\right),1/3\right)$-secure
	  weak pseudorandom family.
\end{restatable}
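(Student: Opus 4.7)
The plan is to leverage Fourier analysis on slices of the Boolean cube. The distinguisher restricts attention to a single middle slice $S_w = \{x \in \{0,1\}^n : |x| = w\}$ with $w$ close to $n/2$, where a slice-adapted Low-Degree algorithm can exploit structural features of $k$-monotone functions that are harder to access on the full cube.

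The first step, and the core of the argument, is a Fourier concentration lemma on the slice: for every $k$-monotone $f$ and some $w$ within $O(\sqrt{n})$ of $n/2$, the restriction $f|_{S_w}$, expanded in an orthonormal basis indexed by degree (for instance the harmonic basis of the Johnson scheme), has Fourier mass on levels above $d$ bounded away from $1$ for $d = \Theta(n^{1/3} k^{2/3} (\log n)^{-1/3})$. I would prove this first for a single monotone function $g$, combining the standard low-level structural inequality for monotone functions (relating low-degree Fourier weight to $\Pr[g=1]$) with a level-$d$ inequality on the slice that uses the Johnson-scheme eigenvalues. Writing $f = g_1 \oplus \cdots \oplus g_k$ and applying either a pigeonhole over levels or a subadditivity of tail-weights under XOR, I would extend the bound to the $k$-monotone case with a polynomial loss in $k$.

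The algorithmic step is a slice version of Linial--Mansour--Nisan. Conditioning on samples with Hamming weight $w$ loses only a $\Theta(1/\sqrt{n})$ factor in sample efficiency, which is absorbed into the runtime. With $\mathrm{poly}(n) \cdot \binom{n}{\le d}$ samples, one estimates every low-degree Fourier coefficient of $f|_{S_w}$ to precision of order $\binom{n}{w}^{-1/2}$, which is the typical magnitude for a truly random function on the slice. The distinguisher accepts iff the estimated low-degree $\ell_2$-energy exceeds a threshold: concentration forces this energy to $\Omega(1)$ for every $k$-monotone $f$, whereas for a uniformly random $R$ the slice coefficients behave like independent Gaussians of variance $1/\binom{n}{w}$, so the energy is $\widetilde{O}(\binom{n}{\le d}/\binom{n}{w}) = o(1)$. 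The runtime is dominated by $\binom{n}{\le d} = \exp(O(d \log(n/d)))$, which matches the stated bound upon substituting the chosen $d$.

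The main obstacle I anticipate is proving the level-$d$ inequality with the right quantitative dependence. A naive Markov bound from the total-influence estimate $\mathbb{I}[g] \le O(\sqrt{n})$ yields concentration only at degree $d \gtrsim k\sqrt{n}$, reproducing the prior $\exp(\widetilde{O}(k\sqrt{n}))$ runtime of Blais et al.\ rather than the improved $\exp(\widetilde{O}(n^{1/3} k^{2/3}))$. Achieving the target exponent will require a sharper, Talagrand/KKL-style refinement tailored to slices of monotone functions that controls Fourier weight level-by-level rather than in bulk, exploiting the fact that monotone functions cannot spread their high-degree Fourier mass too uniformly across mid-range levels.
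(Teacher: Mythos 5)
Your high-level plan matches the paper's: restrict to a middle slice, prove low-degree Fourier concentration for the restriction of a $k$-monotone function in the Filmus/Srinivasan slice basis, and run a slice-adapted Low-Degree Algorithm. But there is a genuine gap in the crucial step, and you have in fact put your finger on it yourself at the end of the proposal.

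You restrict attention to slices with $|w - n/2| = O(\sqrt{n})$ and then, correctly, observe that a bulk influence bound of $O(k\sqrt{n})$ on such a slice only gives concentration at degree $\Theta(k\sqrt{n})$, reproducing the Blais et al.\ bound. Your proposed fix is a Talagrand/KKL-style level-by-level refinement, which the paper neither proves nor needs, and which is not clear to be true. The paper's actual source of improvement is quite different and more elementary: it does not sharpen the influence bound on any fixed slice, but instead \emph{widens the window of slices searched}. Concretely, the paper first proves (via the O'Donnell--Wimmer shadow inequality applied to each monotone component $h_i$, together with a telescoping sum of densities $\mu(h_i|_r)$ across $r$, and a union bound over the $k$ components) that $\sum_{r=0}^{n-1}\I[f|_r] \le kn$. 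Averaging this sum over a window of $t$ slices around $n/2$ produces a slice $r^\star$ with $\I[f|_{r^\star}] \le kn/t$, hence Fourier tail $W^{>d} < \epsilon$ once $d\epsilon \ge 2kn/t$. The cost of conditioning the sample stream on this slice is a factor $\sqrt{n}\cdot 2^{O(t^2/n)}$, and the cost of enumerating degree-$\le d$ coefficients is $n^{O(d)} = 2^{O(d\log n)}$. Balancing $d\log n$ against $t^2/n$ with $d \approx kn/t$ gives $t = \Theta\!\big((kn^2\log n)^{1/3}\big)$ and total time $\exp\!\big(O(n^{1/3}(k\log n)^{2/3})\big)$. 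Crucially, for the distinguisher $t$ is allowed to be much larger than $\sqrt{n}$ (roughly $n^{2/3}$ for constant $k$), because a distinguisher need not have noticeable advantage on the full cube; it can afford the $2^{O(t^2/n)}$ sample blowup from hitting a rarer slice. Your restriction to $|w - n/2| = O(\sqrt{n})$ forecloses exactly this freedom, which is why your argument stalls at $\exp(\widetilde O(k\sqrt{n}))$. So the missing ingredient is not a sharper Fourier/influence inequality, but the telescoping sum $\sum_r \I[f|_r] \le kn$ and the explicit trade-off between window width $t$ and conditioning cost.
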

The previous best distinguisher for $k$-monotone weak PRFs is the learning algorithm by Blais et al.~\cite{BCOST15} which runs in $\exp\big(O(n^{1/2}k\log n)\big)$ time. Our result improves an $\Omega\big(n^{1/6}(k\log n)^{1/3}\big)$ factor in the exponent.

Theorem~\ref{thm:distinguisher} implies that exponentially secure weak PRFs requires $\log{n}-O(\log\log{n})$ negations, which is optimal up to an additive $O(\log\log{n})$ term. Therefore, weak PRFs cannot bypass the limitations of PRFs in terms of achieving exponential security.

Theorem~\ref{thm:distinguisher} also implies that $1$ negation functions can be distinguished in $\exp\big(O(n^{1/3}\log^{2/3}{n})\big)$ time. Therefore,  unlike testing $1$ negation (using $1$-sided non-adaptive tester)~\cite{GKW19} and learning $1$ negation to high accuracy~\cite{BCOST15}, distinguishing $1$ negation doesn't suffer from the $\exp(\sqrt{n})$ barrier.

It is natural to ask if we can leverage the distinguisher to a learning algorithm. Our second result gives weak learning algorithms for $k$-monotone functions under the uniform distribution.
\begin{restatable}{theorem}{learner}\label{thm:learner}
	$k$-monotone functions are weakly learnable in time $\exp\left(O\big(k\sqrt{n\log n}\big)\right)$.
\end{restatable}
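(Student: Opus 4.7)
My plan is to reduce weakly learning a $k$-monotone function $f$ to finding a low-degree harmonic on a middle slice of $\{0,1\}^n$ that correlates non-trivially with $f$, and then to locate it by brute-force enumeration. Concretely, the algorithm iterates over every Hamming-weight slice $\binom{[n]}{w}$ with $w$ within $O(\sqrt{n})$ of $n/2$, over every slice-harmonic $\chi$ of degree at most $d := \Theta\bigl(k\sqrt{n/\log n}\bigr)$, and over both signs $\pm\chi$; for each candidate it estimates the conditional correlation $\E[f(x)\chi(x)\mid |x|=w]$ from labelled uniform samples that happen to fall in the slice, and finally outputs the empirically best candidate as the hypothesis (extended as a fair coin flip off the slice). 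The number of candidates is $O(n)\cdot\sum_{i\le d}\binom{n}{i} = \exp\bigl(O(k\sqrt{n\log n})\bigr)$, which matches the claimed running time; the same many labelled samples suffice, by Chernoff and a union bound, to estimate every correlation to within $1/\poly(n)$.

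For correctness I would invoke the structural ingredient behind Theorem~\ref{thm:distinguisher}: for any near-balanced $k$-monotone $f$, some middle slice $\binom{[n]}{w}$ carries non-negligible slice-Fourier mass on harmonics of degree $\le d$. The biased case is handled separately: if $|\E[f]|\ge 1/\poly(n)$ then the constant-majority hypothesis is already a weak learner, so we may assume $f$ is near-balanced, and averaging forces some middle slice to be near-balanced as well. What I really need, and what is the delicate point of the argument, is the existence of a \emph{single} low-degree slice-harmonic of magnitude $\ge 1/\poly(n)$ rather than merely bulk $\ell_2$ mass spread across many harmonics. I would derive this by a slice-analogue of the Blum--Burch--Langford style bound used to weakly learn monotone functions, exploiting that a $k$-monotone restriction to a slice still has bounded total slice-influence.

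Given such a $\chi^{\ast}$, the hypothesis $\mathrm{sign}(\chi^{\ast})$ on the chosen slice (fair coin flip elsewhere) correlates with $f$ globally by at least $\Theta(1/\sqrt{n})\cdot(1/\poly(n))$, still $1/\poly(n)$, because the slice has uniform measure $\Theta(1/\sqrt{n})$ and the coin contributes $0$ in expectation. A Chernoff plus union bound over the $\exp(O(k\sqrt{n\log n}))$ candidates then ensures that the empirically best candidate has true advantage $\ge 1/\poly(n)$ with probability at least $2/3$.

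The main obstacle is exactly the structural lemma above: lifting the $\ell_2$ slice-Fourier concentration used for distinguishing into a \emph{pointwise} coefficient lower bound, uniformly over all near-balanced $k$-monotone $f$. Once this lemma is secured, the rest is a routine slice-adaptation of the Linial--Mansour--Nisan low-degree algorithm, and the calibration $d=\Theta(k\sqrt{n/\log n})$ is precisely what balances enumeration cost $n^{d}$ against the required Fourier degree, yielding the exponent $k\sqrt{n\log n}$ and the claimed $\sqrt{\log n}$ improvement over the Blais et al.\ bound.
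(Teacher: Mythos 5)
Your algorithm deviates from the paper's in a way that creates a real gap, and you have in fact put your finger on it yourself: you propose to enumerate individual slice-harmonics $\chi$ and output $\mathrm{sign}(\chi^{\ast})$ for the best one, which requires the existence of a \emph{single} low-degree slice-harmonic with correlation $\geq 1/\poly(n)$. You then assert that this pointwise bound can be derived ``by a slice-analogue of the Blum--Burch--Langford style bound... exploiting that a $k$-monotone restriction to a slice still has bounded total slice-influence.'' This step does not go through. In the monotone case the large-coefficient guarantee is special: for a monotone $f$ on the cube one has $\hat{f}(\{i\}) = \I_i[f]/2$, so total influence being bounded away from $0$ directly produces a large degree-$1$ coefficient. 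For $k$-monotone functions (on the cube or on a slice) this identity breaks; bounded total slice-influence only controls the \emph{weighted $\ell_2$ mass} of low-degree coefficients (Theorem~\ref{slice_inf}), which is compatible with that mass being spread over $n^{\Theta(d)}$ coefficients each of size $n^{-\Theta(d)}$. Indeed, the existence of a single large low-degree coefficient for $k$-monotone functions is exactly what the paper poses as an open problem (Conjectures~\ref{conj:fourier_spectrum} and~\ref{conj:fourier_spectrum2}), so your route would require resolving a question the authors explicitly could not.

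The paper avoids this entirely by not seeking a single large coefficient. Its Algorithm~\ref{alg:learner} forms the \emph{full} degree-$\leq d$ projection $g = \sum_{B:\deg \leq d}\widehat{f|_r}(B)\chi_B$ on the concentrated slice, using the fact that $W^{\leq d}[f|_{r^\star}] > 1/2$ (from Lemma~\ref{lm:main}) to conclude $\|f|_{r^\star} - g\|_2 \leq \sqrt{3}/2 < 1$, and then rounds $g$ to a Boolean hypothesis by a uniformly random threshold $\theta$ (Claim~\ref{lem:learn}, the random-threshold rounding trick). This gives error $\leq \|f - g\|_1/2 < 1/2$ on the slice and hence a $1/\poly(n)$ overall advantage, using only the $\ell_2$ concentration that the influence bound actually delivers. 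Your ``routine slice-adaptation of LMN'' phrase is closer to what the paper really does than your $\mathrm{sign}(\chi^\ast)$ algorithm is; to fix your proposal, replace the single-harmonic hypothesis with the rounded low-degree projection $\mathrm{sign}(g - \theta)$ and the correctness argument becomes sound with the same enumeration cost.
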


Our result slightly improves the previous best weak learner due to Blais et al.~\cite{BCOST15}, by a $\Omega\big(\sqrt{\log n}\big)$ factor in the exponent.

\warn{We conjecture that both Theorems \ref{thm:distinguisher} and \ref{thm:learner} are not tight. However, we believe that any further improvement of our results, even for a single negation, require completely new techniques or proving rather hard conjectures which seem out of reach. See Section \ref{sec:discussion} for more details.}

\vspace{2ex}
\noindent\textbf{Our Techniques.}
Blais et al.~\cite{BCOST15} showed a Fourier concentration of $k$-monotone functions on low degree monomials, by bounding the total
influence of $k$-monotone functions. Then they apply the ``Low-Degree Algorithm" established by
Linial, Mansour, and Nisan~\cite{LMN93} to learn $k$-monotone functions. One natural idea to improve their learning algorithm is to show Fourier concentration on lower levels. However,
their influence bound is tight and even for monotone functions, we cannot show  concentration bound on fewer than $\Omega(\sqrt{n})$ levels ~\cite{DFTWW14}, which will require at least $n^{\Omega(\sqrt{n})}$ time by applying the ``Low-Degree Algorithm".

Our main technique is using Fourier analysis on slices~\cite{Filmus,srinivasan2011symmetric}. Although the Fourier concentration on the Boolean cube cannot be improved, we show some ``middle" slices of $k$-monotone functions can have much stronger Fourier concentration.  Then by adapting the ``Low-Degree Algorithm" to the slices, we obtain a distinguisher with significantly improved running time. Our weak learner is a simple variant of the ``Low-Degree Algorithm" on slices.

Fourier analysis on slices has a notion of total influence which allows us to show Fourier concentration on a slice in a similar way. We give an upper bound on the sum of total influences for all ``middle" slices of any $k$-monotone function. It implies the existence of a ``middle" slice function with small total influence, and therefore good concentration. Then we optimize the number of ``middle" slices to be analyzed to get an efficient algorithm.

\vspace{2ex}
\noindent\textbf{Paper Organization.} We begin with the basic notations in Section 2, then present the structural results for $k$-monotone functions in Section 3. In Sections 4 and 5, we present the distinguisher and weaker learner.

\section{Preliminaries}
\warn{In this paper, all the logarithms are base 2.}
\subsection{Alternating Number, Negation Complexity, k-monotone Functions}\label{sec:k-mon}

For any two inputs $x, y \in \{0, 1\}^n$, we say $x \prec y$ iff $x \neq y$ and $x_i \le y_i$ for all $i \in [n]$.
A chain $X = (x^1, x^2, \dots, x^\ell)$ of length $\ell$ is an increasing sequence of inputs in $\{0, 1\}^n$
where $x^i \prec x^{i+1}$ for $i \in [\ell - 1]$.
For a Boolean function $f:\{0,1\}^n\rightarrow \{0, 1\}$, we define the alternating number of $f$
on chain $X$ to be the number of value flips on this chain:
\[a(f, X) = \left| \{i \in [\ell - 1] : f(x^i) \neq f(x^{i+1})\}\right|\]
Let $\cC$ be the set of all chains on $\{0, 1\}^n$, the alternating number of $f$ is
\[a(f) = \max_{X \in \cC} a(f, X)\]
Note that the alternating number of a monotone function is no more than 1.

A celebrated result of Markov connects the alternating number of a Boolean function $f$ to the
negation complexity $\N(f)$ -- the minimum number of negation gates required in any
Boolean $\sf{AND}-\sf{OR}$ circuits to compute $f$.

\begin{theorem}[Markov's Theorem~\cite{Mark}]\label{thm:markov}
Let $f:\{0,1\}^n \rightarrow \{0, 1\}$ be a function which is not identically $0$ with $f(0^n) = 0$,
then $\N(f) = \lceil\log\big(a(f) + 1\big)\rceil - 1$.
\end{theorem}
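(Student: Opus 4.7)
The plan is to prove both inequalities $\N(f) \le \lceil \log(a(f)+1) \rceil - 1$ and $\N(f) \ge \lceil \log(a(f)+1) \rceil - 1$ and combine them.

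For the lower bound on $\N(f)$, I would fix an arbitrary AND-OR-NOT circuit $C$ computing $f$ with $t = \N(f)$ negations, together with any chain $X \in \cC$ of length $\ell$. For each wire $w$ in $C$, regard $(w(x^1),\dots,w(x^\ell))$ as a $0/1$-string and let $a(w,X)$ be its alternation count. The goal is to establish inductively that $a(w,X) \le 2^{t_w+1}-1$, where $t_w$ is the number of NOTs in the subcircuit rooted at $w$. Literals give $a \le 1$, and a NOT gate preserves $a$ since flipping a $0/1$-sequence does not change its alternation count. The delicate case is a monotone gate $w = u \wedge v$ or $w = u \vee v$: the naive subadditivity $a(w,X) \le a(u,X) + a(v,X)$ is far too loose (for instance, it gives $2$ when $u, v$ are both monotone, whereas the truth is $1$). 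To tighten it I would track the descent count $d(w,X)$, the number of $1\to 0$ transitions in $w$'s sequence, note that $d$ is also subadditive at monotone gates, and use $d(\neg u, X) = \text{(ascents of }u\text{)} \le d(u,X) + 1$ at negations. A DAG-aware amortization then yields the target $2^{t+1}-1$ bound for the output wire, and taking the supremum over $X$ gives $a(f) \le 2^{t+1}-1$, i.e.\ $t \ge \lceil \log(a(f)+1) \rceil - 1$.

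For the matching upper bound on $\N(f)$, the plan is to build an explicit circuit for $f$. Define monotone functions $g_1 \ge g_2 \ge \cdots \ge g_{a(f)}$ by setting $g_i(x) = 1$ iff there is a chain from $0^n$ to $x$ along which $f$ flips value at least $i$ times. Each $g_i$ is monotone (extending a chain can only add flips) and they are pointwise decreasing in $i$. Since $f(0^n) = 0$, every chain from $0^n$ to $x$ exhibits a flip count congruent to $f(x) \pmod 2$; in particular, if $k_{\max}(x)$ denotes the maximum such flip count, then $f(x) \equiv k_{\max}(x) \pmod 2$, so $f(x) = g_1(x) \oplus g_2(x) \oplus \cdots \oplus g_{a(f)}(x)$. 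It then suffices to compute the XOR of $a(f)$ monotone bits with $\lceil \log(a(f)+1) \rceil - 1$ negations, which follows from a classical shared-negation parity circuit (e.g.\ Fischer's construction) matching the Markov-optimal negation count for XOR on $a(f)$ inputs.

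The step I expect to be the main obstacle is the AND/OR case in the lower-bound induction. The alternation bound is lossy in the DAG setting, and reaching the exponential $2^{t+1}-1$ ceiling needs either the descent-based amortization sketched above or an alternative decomposition that fixes a topmost NOT gate and performs a Shannon-style case analysis on its value; both routes are technical and carry the bulk of Markov's original argument.
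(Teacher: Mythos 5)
The paper does not prove Markov's theorem; it cites~\cite{Mark} and uses it as a black box, so there is no in-paper proof to compare against. Evaluating your sketch on its own terms:

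Your upper-bound direction is sound. Defining $g_i(x)$ as the indicator that some chain from $0^n$ to $x$ exhibits at least $i$ alternations gives a decreasing chain $g_1\ge\cdots\ge g_{a(f)}$ of monotone functions, and the observation that (because $f(0^n)=0$) the alternation count of \emph{every} chain from $0^n$ to $x$ has the same parity as $f(x)$ correctly yields $f=g_1\oplus\cdots\oplus g_{a(f)}$. The remaining ingredient --- that the parity of $m$ \emph{sorted} monotone bits can be computed with $\lceil\log(m+1)\rceil-1$ negations via a recursive binary-search circuit --- is indeed classical and fills the gap, though in a self-contained writeup you would want to spell out that recursion rather than only name it.

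The lower-bound direction has a genuine gap as your primary route is stated. The per-wire induction $a(w,X)\le 2^{t_w+1}-1$ (or the descent version $d(w,X)\le 2^{t_w}-1$) does not close at AND/OR gates in a DAG: when $w=u\wedge v$ and the subcircuits below $u$ and $v$ share all their negation gates, $t_w=t_u=t_v$ while the only local bound you have is $d(w,X)\le d(u,X)+d(v,X)\le 2(2^{t_w}-1)$, a factor of two too large. This cannot be repaired by a smarter potential on individual wires, because the failure is precisely the sharing. The argument that actually works is the one you mention in passing: pick a topologically first negation gate $\neg g$ (so $g$ is monotone), fix a chain $X$, and note that $g$ flips at most once along $X$; hardwire $\neg g$ to $1$ on the prefix and to $0$ on the suffix to get two circuits with one fewer negation, apply induction to each, and pay $+1$ at the splice. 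That Shannon-style decomposition yields $d(f)\le 2^t-1$, hence $a(f)\le 2d(f)+1\le 2^{t+1}-1$, which is the bound you need. You should promote this from ``alternative route'' to the main argument, since the gate-by-gate version does not go through for circuits with fan-out greater than one, and negation complexity is defined over general circuits, not formulas.
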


Blais et al.~\cite{BCOST15} showed decomposition for functions with low alternating number~\cite{BCOST15}.
\begin{theorem}[Blais et al.~\cite{BCOST15}]
Let $f$ be a $k$-alternating function, then $f(x) = h(m_1(x),\dots, m_k(x))$
where $m_i(x)$ is monotone and h is the parity function or its negation. Conversely, any
function of this form is $k$-alternating.
\end{theorem}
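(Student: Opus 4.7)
I would split the statement into its two directions. For the easy ``converse'' direction, suppose $f(x) = h(m_1(x),\dots,m_k(x))$ where each $m_i$ is monotone and $h$ is $\mathrm{XOR}$ or its negation. Fix any chain $X = (x^1,\dots,x^\ell)$. Each coordinate function $j \mapsto m_i(x^j)$ is non-decreasing and Boolean, so it changes value at most once along the chain. Hence the vector $(m_1(x^j),\dots,m_k(x^j))$ changes in at most $k$ steps, and each such step can flip the parity of the vector at most once. Therefore $f$ flips at most $k$ times along $X$, giving $a(f,X) \le k$ and hence $a(f) \le k$.

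The main direction is the forward implication: if $a(f) \le k$, then $f$ has such a decomposition. The key idea is to ``stratify'' the cube by a canonical monotone quantity. Define
\[ g(x) \;=\; \max\{\,a(f,X) : X \text{ is a chain ending at } x\text{ starting at } 0^n\,\}. \]
By hypothesis $g(x) \le a(f) \le k$. It is immediate that $g$ is monotone: given $x \prec y$, any chain from $0^n$ to $x$ may be extended to one ending at $y$ (appending $y$ changes the alternating count by $0$ or $1$), so $g(y) \ge g(x)$. Now set
\[ m_i(x) \;=\; \mathbbm{1}[\,g(x) \ge i\,], \qquad i = 1,\dots,k, \]
so that each $m_i$ is monotone and $\sum_{i=1}^{k} m_i(x) = g(x)$, whence $\bigoplus_{i=1}^{k} m_i(x) = g(x) \bmod 2$.

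The final step is a parity argument connecting $g$ to $f$. Along \emph{any} chain from $0^n$ to $x$, the number of value flips of $f$ has the same parity as $f(0^n) \oplus f(x)$ (since the endpoints determine the parity of flips regardless of the path). In particular, the maximum-flip chain satisfies $g(x) \equiv f(0^n) \oplus f(x) \pmod{2}$, and therefore
\[ f(x) \;=\; f(0^n) \,\oplus\, \bigoplus_{i=1}^{k} m_i(x). \]
Taking $h$ to be $\mathrm{XOR}$ when $f(0^n)=0$ and its negation otherwise yields the required decomposition.

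The step I expect to require the most care is verifying the parity equation $g(x) \equiv f(0^n) \oplus f(x) \pmod 2$. The subtlety is that different chains from $0^n$ to $x$ may have different alternating counts, but all such counts must share the same parity because each individual flip toggles the current value of $f$ along the chain; the maximum count inherits this common parity. Once that invariant is isolated, the rest of the argument is bookkeeping.
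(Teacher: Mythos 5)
The paper states this theorem as a citation to Blais et al.~\cite{BCOST15} and does not include a proof of it, so there is no in-paper argument to compare against. That said, your proof is correct. The converse direction is the easy counting argument: each $m_i$ is monotone and hence flips at most once on any chain, so the vector $(m_1,\dots,m_k)$ changes at no more than $k$ steps and $f$ flips at most $k$ times. For the forward direction, the canonical ``alternation depth'' function $g(x)$, defined as the maximum of $a(f,X)$ over chains $X$ from $0^n$ to $x$, is monotone and bounded by $k$; its thresholds $m_i(x) = \mathbbm{1}[g(x)\ge i]$ give the desired monotone pieces, and the parity invariant $g(x) \equiv f(0^n) \oplus f(x) \pmod 2$ (every flip along a chain toggles the current value of $f$, so the parity of the flip count is path-independent and is therefore inherited by the maximum) closes the argument and determines whether $h$ is $\mathrm{XOR}$ or its negation. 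This is the standard Markov-style decomposition and, to the best of my knowledge, is essentially the argument used in the cited paper.
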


The above characterization shows a simple structure for functions with a low alternating number,
which are computable by few negation gates.
To simplify notation, we'll focus on the parity of few monotone functions.

\begin{definition} [$k$-monotone function]
A function $f:\{0,1\}^n\rightarrow\{0,1\}$ is said to be $k$-monotone, if there exist
$k$ monotone functions $g_1, g_2, \dots, g_k$ such that $f = g_1 \oplus g_2 \oplus \cdots \oplus g_k$. 

\end{definition}

\subsection{Orthogonal Basis for Functions over a Slice}
Given a set of strings $A\subseteq{[n]\choose r}:=\{(x_1,\ldots,x_n)\in\{0,1\}^n:\sum_i{x_i}=r\}$, denote its density in this slice by $\mu(A)$, \emph{i.e.}, $\mu(A)=|A|/{n\choose r}$. Define its upper shadow as
\begin{center}
	$\partial^+ A:=\{x\in{[n]\choose r+1}:x\succ y$ for some $y\in A\}$,
\end{center}
and its lower shadow as
\begin{center}
	$\partial^- A:=\{x\in{[n]\choose r-1}:x\prec y$ for some $y\in A\}$.
\end{center}

Filmus~\cite{Filmus} and Srinivasan~\cite{srinivasan2011symmetric} independently introduced an orthogonal basis for functions over a slice of the Boolean hypercube ${[n]\choose r}$, which plays a central role in our proofs. All the following definitions can be found in~\cite{Filmus}. We present them here for the reader's convenience.
\begin{definition}
For $d\leq n/2$, a sequence of length $d$ is a sequence $S=s_1,\ldots,s_d$ of distinct numbers in $[n]$. The set of all sequences of length $d$ is denoted by $\mathcal{S}_{n,d}$, and the set of all sequences is denoted by $\mathcal{S}_{n}$.

For any two disjoint sequences $A,B\in \mathcal{S}_{n,d}$ we define the function $\chi_{A,B}$ as
\[\chi_{A,B}=\prod_{i=1}^d(x_{a_i}-x_{b_i}).\]
\end{definition}
\begin{definition}
  For $d\leq n/2$, let $A,B\in \mathcal{S}_{n,d}$ be disjoint. We say that $A$ is smaller than $B$, written $A<B$, if $a_i<b_i$ for all $i\in[d]$. Similarly, we say that $A$ is at most $B$, written $A\leq B$, if $a_i\leq b_i$ for all $i\in [d]$.

  A sequence $B\in \mathcal{S}_n$ is a top set if $B$ is increasing and for some disjoint sequence $A$ of the same length, $A<B$. The set of top sets of length $d$ is denoted by $\mathcal{B}_{n,d}$, and the set of all top sets is denoted by $\mathcal{B}_{n}$.
\end{definition}
\begin{definition}
For $B\in\mathcal{B}_{n,d}$, define
\[\chi_B=\sum_{A\in \mathcal{S}_{n,d}:A<B}\chi_{A,B}.\]
\end{definition}
\begin{theorem}[Filmus~\cite{Filmus}]\label{thm:def} Let $r\leq n/2$ be an integer, the set $\{\chi_B:B\in \mathcal{B}_{n,d}$ for some $d\leq r \}$ is an orthogonal basis for the vector space of functions over the slice ${[n]\choose r}$. The Young-Fourier expansion of $f:{[n]\choose r}\rightarrow \mathbb{R}$ is the unique representation
\[f=\sum_{B\in \mathcal{B}_{n,d},\ d\leq r}\hat{f}(B)\chi_B,\]
where $\hat{f}(B)=\frac{\Braket{f,\chi_B}}{\|\chi_{B}\|_2^2}$. Here $\Braket{f,g}:=\mathbb{E}_{x\sim U}[f(x)g(x)]$. In addition for $B\in \mathcal{B}_{n,d}$,
\begin{enumerate}
  \item $\| \chi_{B}\|_2^2=\prod_{i=1}^{d}\frac{(b_i-2(i-1))(b_i-2(i-1)-1)}{2}\cdot 2^d\frac{r^{\underline{d}}(n-r)^{\underline{d}}}{n^{\underline{2d}}}=n^{O(d)}$. In particular, if $r\geq \frac{n}{4}$ and $d=o(n)$, then $\|\chi_{B}\|_2^2\geq 2^{-O(d)}$. Here, $r^{\underline{d}}=\prod_{i=0}^{d-1}(r-i)$.
  \item $\| \chi_{B}\|_{\infty}\leq\sum_{A\in \mathcal{S}_{n,d}:A<B}\|\chi_{A,B}\|_{\infty}=n^{O(d)}$.
\end{enumerate}
\end{theorem}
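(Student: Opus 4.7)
The plan is to prove four claims in order: (a) the collection has cardinality $\sum_{d=0}^{r} |\mathcal{B}_{n,d}| = \binom{n}{r}$; (b) orthogonality $\langle \chi_B, \chi_{B'} \rangle = 0$ for distinct top sets $B, B'$; (c) the closed-form expression for $\|\chi_B\|_2^2$; and (d) the $\ell_\infty$-bound. Parts (a) and (b) together force the basis property, since $\binom{n}{r}$ pairwise-orthogonal nonzero vectors must span the $\binom{n}{r}$-dimensional space of real-valued functions on the slice. For (a), the key identity is $|\mathcal{B}_{n,d}| = \binom{n}{d} - \binom{n}{d-1}$: a standard Hall-type argument shows that an increasing $B = (b_1 < \cdots < b_d)$ admits some disjoint $A < B$ iff $b_i \geq 2i$ for every $i \in [d]$, and the number of increasing sequences satisfying this constraint is $\binom{n}{d} - \binom{n}{d-1}$ by a classical ballot/reflection argument. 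Summing over $d$ then telescopes to $\binom{n}{r}$.

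The main obstacle is (b), which I would attack via the $S_n$-representation theory of the slice. Introduce the down-shift $D : L^2(\binom{[n]}{r}) \to L^2(\binom{[n]}{r-1})$ defined by $(Df)(y) = \sum_{x \succ y} f(x)$, together with its adjoint up-shift $U$. Classically, $L^2(\binom{[n]}{r})$ decomposes orthogonally as $\bigoplus_{d=0}^{r} V_d$, where $V_d = U^{r-d}(\ker D \cap L^2(\binom{[n]}{d}))$ is the copy of the $(n-d, d)$-Specht module sitting inside level $r$; orthogonality across different $d$ is then automatic. The technical step is to verify that $D \chi_B = 0$ whenever $B$ is a top set of length $d$, which by direct expansion of $\chi_B = \sum_{A < B} \chi_{A, B}$ reduces to showing that the sum over $A$ produces exactly the signed cancellations that the condition $A < B$ is designed to enforce after one element of the support is removed. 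Within fixed $d$, orthogonality among distinct $\chi_B$'s is then deduced either from a triangularity argument (expanding each $\chi_B$ in the standard monomial basis, the leading term is indexed by $B$ itself) combined with the dimension match $|\mathcal{B}_{n, d}| = \dim V_d$; or by a direct expectation computation, using that $\chi_{A, B}(x) \chi_{A', B'}(x)$ vanishes unless every pair $\{a_i, b_i\}$ and $\{a'_j, b'_j\}$ is ``split'' (one coordinate $0$, one coordinate $1$), which enforces rigid matching constraints on the pair structure.

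For (c), expand $\|\chi_B\|_2^2 = \sum_{A, A' < B} \mathbb{E}_x[\chi_{A, B}(x) \chi_{A', B}(x)]$; note that the cross terms $A \neq A'$ do contribute, so this is not simply $\sum_A \|\chi_{A,B}\|_2^2$. Each expectation is supported on the event that the pairs $\{a_i, b_i\}$ and $\{a'_i, b_i\}$ are split; hypergeometric counting on the slice yields a common base factor $2^d r^{\underline d}(n-r)^{\underline d}/n^{\underline{2d}}$, and summing over all valid $(A, A')$-pairs produces the combinatorial multiplier $\prod_{i=1}^{d} (b_i - 2(i-1))(b_i - 2(i-1) - 1)/2$ through a direct count that tracks the previously chosen coordinates. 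The lower bound $\|\chi_B\|_2^2 \geq 2^{-O(d)}$ for $r \geq n/4$ and $d = o(n)$ then follows by crude estimation of the falling factorials. For (d), each $\chi_{A, B}$ is a product of $d$ factors in $\{-1, 0, 1\}$, so $\|\chi_{A, B}\|_\infty \leq 1$; combining with $|\mathcal{S}_{n, d}| \leq n^d$ and the triangle inequality gives $\|\chi_B\|_\infty \leq n^d = n^{O(d)}$.
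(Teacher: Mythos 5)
The paper does not prove this theorem at all: it is imported from Filmus's work and stated ``for the reader's convenience,'' so there is no in-paper argument to compare against. Evaluating your sketch on its own terms, it takes a more overtly representation-theoretic route than Filmus, whose proof is a self-contained combinatorial computation on inner products of the $\chi_{A,B}$'s. Your parts (a), (c), (d) are correct in outline: the Hall-condition characterization $b_i\geq 2i$, the telescoping count $\sum_{d\le r}\bigl[\binom{n}{d}-\binom{n}{d-1}\bigr]=\binom{n}{r}$, the cross-term hypergeometric count for $\|\chi_B\|_2^2$, and the trivial triangle-inequality $\ell_\infty$ bound are all the right moves.

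There is, however, a genuine gap in your part (b). For orthogonality within a fixed degree $d$, your first proposed route --- triangularity of $\chi_B$ in the monomial basis plus the dimension match $|\mathcal{B}_{n,d}|=\dim V_d$ --- shows only that $\{\chi_B:B\in\mathcal{B}_{n,d}\}$ is a \emph{basis} of $V_d$; it does not show they are pairwise orthogonal, and a triangular basis of a subspace whose size matches the dimension need not be orthogonal. Since $V_d$ is a single irreducible $S_n$-module and the $\chi_B$'s are not an $S_n$-orbit, you cannot get within-$d$ orthogonality for free from Schur's lemma or multiplicity-freeness either. So the burden falls entirely on your second route, the direct expectation computation, which you gesture at (``rigid matching constraints on the pair structure'') but do not carry out; that computation is in fact the technical heart of the theorem. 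A smaller imprecision: the claim ``$D\chi_B=0$'' is not literally true for $\chi_B$ viewed as a function on slice $r>d$. What one actually shows is that the restriction of $\chi_B$ to the slice $\binom{[n]}{d}$ is harmonic (killed by $D$), and that $\chi_B$ on slice $r$ is, up to normalization, the $U^{r-d}$-lift of that harmonic function --- which then places it in $V_d$. You should also record that each $\chi_B$ is nonzero (this is implicitly needed for the dimension-count step of your argument), which your norm formula in (c) delivers, but only once (c) is actually established independently of (b).
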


By Boolean duality, we can extend the above Young-Fourier expansion to where $r>n/2$ naturally. This can be done by replacing the basis $\{\chi_B(x)\}$ by $\{\chi_{\bar{B}}(x):=\chi_B(\bar{x})\}$ where $\bar{x}$ is obtained by flipping all bits of $x$.
\begin{corollary}\label{thm:def2}
    Let $r>n/2$ be an integer, the set $\{\chi_{\bar{B}}(x):B\in \mathcal{B}_{n,d}$ for some $d\leq n-r\}$ is an orthogonal basis for the vector space of functions over the slice ${[n]\choose r}$. The Young-Fourier expansion of $f:{[n]\choose r}\rightarrow \mathbb{R}$ is the unique representation
\[f=\sum_{B\in \mathcal{B}_{n,d},\ d\leq n-r}\hat{f}(B)\chi_{\bar{B}},\]
where $\hat{f}(B)=\frac{\Braket{f,\chi_{\bar{B}}}}{\|\chi_{\bar{B}}\|_2^2}$. In addition, we have $\|\chi_{\bar{B}}\|_2^2=\|\chi_{B}\|_2^2$ and $\|\chi_{\bar{B}}\|_\infty=\|\chi_{B}\|_\infty$.
\end{corollary}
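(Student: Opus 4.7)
The plan is to prove Corollary \ref{thm:def2} by Boolean duality, reducing the case $r > n/2$ to the already established Theorem \ref{thm:def} via the bit-flip map $x \mapsto \bar{x}$. Note that this map is a bijection between ${[n] \choose r}$ and ${[n] \choose n-r}$, and since $r > n/2$, we have $n-r < n/2$, so Theorem \ref{thm:def} applies to functions on ${[n] \choose n-r}$.

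First I would set up the correspondence. Given $f : {[n] \choose r} \to \mathbb{R}$, define $g : {[n] \choose n-r} \to \mathbb{R}$ by $g(y) = f(\bar{y})$. By Theorem \ref{thm:def} (applicable since $n-r \leq n/2$), $g$ admits a unique expansion $g = \sum_{B \in \mathcal{B}_{n,d}, d \leq n-r} \hat{g}(B) \chi_B$, where $\{\chi_B\}$ is an orthogonal basis on ${[n] \choose n-r}$. Substituting $y = \bar{x}$ and using the definition $\chi_{\bar{B}}(x) := \chi_B(\bar{x})$ yields $f(x) = \sum_B \hat{g}(B) \chi_{\bar{B}}(x)$, which gives existence of the claimed expansion (with $\hat{f}(B) := \hat{g}(B)$).

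Next I would verify orthogonality and the norm identities. Since $x \mapsto \bar{x}$ is a measure-preserving bijection between the uniform distributions on ${[n] \choose r}$ and ${[n] \choose n-r}$, for any $B, B' \in \mathcal{B}_{n,d}, \mathcal{B}_{n,d'}$ with $d, d' \leq n-r$,
\[
\langle \chi_{\bar{B}}, \chi_{\bar{B'}} \rangle_{{[n] \choose r}} = \mathbb{E}_{x \sim U_{{[n] \choose r}}}\bigl[\chi_B(\bar{x}) \chi_{B'}(\bar{x})\bigr] = \mathbb{E}_{y \sim U_{{[n] \choose n-r}}}\bigl[\chi_B(y) \chi_{B'}(y)\bigr] = \langle \chi_B, \chi_{B'} \rangle_{{[n] \choose n-r}}.
\]
Taking $B = B'$ gives $\|\chi_{\bar{B}}\|_2^2 = \|\chi_B\|_2^2$, and otherwise both sides vanish by the orthogonality guaranteed by Theorem \ref{thm:def}, which establishes orthogonality of the shifted basis. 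The $L^\infty$ identity is similar: $\|\chi_{\bar{B}}\|_{\infty,{[n] \choose r}} = \max_{x \in {[n] \choose r}} |\chi_B(\bar{x})| = \max_{y \in {[n] \choose n-r}} |\chi_B(y)| = \|\chi_B\|_\infty$. Uniqueness of the expansion and the formula $\hat{f}(B) = \langle f, \chi_{\bar{B}} \rangle / \|\chi_{\bar{B}}\|_2^2$ then follow from orthogonality in the standard way: pair both sides of the expansion with $\chi_{\bar{B}}$ and divide by $\|\chi_{\bar{B}}\|_2^2$.

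I do not anticipate a real obstacle here; the only thing to be careful about is that the dimension count matches, i.e., that $\{\chi_{\bar{B}} : B \in \mathcal{B}_{n,d}, d \leq n-r\}$ has the same cardinality as ${[n] \choose r}$. This is immediate from the corresponding statement for $\{\chi_B\}$ on ${[n] \choose n-r}$ in Theorem \ref{thm:def}, combined with ${n \choose n-r} = {n \choose r}$. Thus Boolean duality transfers the orthogonal basis property verbatim, completing the proof.
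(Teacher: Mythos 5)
Your proposal is correct and follows exactly the Boolean-duality argument that the paper gives (informally) just before the corollary: pull back along the bit-flip bijection $x \mapsto \bar{x}$ to reduce to Theorem~\ref{thm:def} on the slice ${[n] \choose n-r}$, and observe that the bijection is measure-preserving so that inner products, norms, and hence orthogonality and uniqueness all transfer verbatim. Your write-up is a careful formalization of what the paper leaves implicit, and there are no gaps.
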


Like functions over the Boolean hypercube, we can define the total weight on level $d$:
\begin{definition} Let $f:{[n]\choose r}\rightarrow\mathbb{R}$, for any $r\leq n$, define
\[
\W^d(f) = \sum_{B\in \mathcal{B}_{n,d}}\hat{f}(B)^2\| \chi_B\|_2^2,
\]
and denote $\W^{>d}(f) = \sum_{d' > d} \W^d$ and $\W^{\leq d}(f) = \sum_{d'\leq  d} \W^d$.
\end{definition}
\begin{definition}
Let $f:{[n]\choose r}\rightarrow \{\pm1\}$. For $i,j\in [n]$, define the influence of $f$ on the pair $(i,j)$ as
\[\I_{ij}[f]=2\Pr[f(x^{(i,j)})\neq f(x)].\]
Here $x^{(i,j)}$ is obtained by switching $x_i$ and $x_j$.
The total influence of $f$ is
\[\I[f]=\frac{1}{n}\sum_{1\leq i<j\leq n}\I_{ij}[f].\]
\end{definition}
\begin{lemma}[O'Donnell and Wimmer~\cite{kkl09}]\label{key}
    Let $f:{[n] \choose r}\rightarrow \{\pm1\}$ and $A=\{x\in {[n]\choose r}: f(x)=-1\}$, then
    \[\min\{\mu(\partial^+ A),\mu(\partial^- A)\}\geq \mu(A)+\frac{n}{4r(n-r)}\cdot \I[f]\geq\mu(A)+\frac{1}{n}\I[f].\]
\end{lemma}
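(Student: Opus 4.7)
The plan is to reduce the statement to a combinatorial double-counting inequality between $|A|$, the shadow sizes, and the Johnson-graph edge boundary $|E(A,\overline A)|$, where $\overline A := \binom{[n]}{r}\setminus A$. The first step is to rewrite the total influence as this edge boundary. Each $\I_{ij}[f]$ counts swap-adjacent pairs $(x,x^{(i,j)})$ with $f(x)\neq f(x^{(i,j)})$, where such a pair requires $x_i\neq x_j$, and each crossing Johnson edge is counted at both endpoints, yielding
\[
\I[f] \;=\; \frac{4\,|E(A,\overline A)|}{n\binom{n}{r}}.
\]
So it suffices to prove both $\mu(\partial^+A)$ and $\mu(\partial^-A)$ are at least $\mu(A) + |E(A,\overline A)|/\bigl(r(n-r)\binom{n}{r}\bigr)$.

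For the upper shadow, I would double-count incidences $(y,x)$ with $y\in\binom{[n]}{r}$, $x\in\binom{[n]}{r+1}$, and $y\prec x$. For each $x\in\partial^+A$ let $d_A(x)$ and $d_{\overline A}(x)$ count the lower neighbors of $x$ in $A$ and $\overline A$, so $d_A(x)+d_{\overline A}(x)=r+1$. Counting pairs with $y\in A$ in two ways gives
\[
(r+1)|\partial^+A| \;=\; (n-r)|A| \;+\; \sum_{x\in\partial^+A} d_{\overline A}(x).
\]
The heart of the argument is a second identity: every Johnson edge $\{y,y'\}$ between $A$ and $\overline A$ has a unique common upper neighbor $y\vee y'\in\binom{[n]}{r+1}$, so $|E(A,\overline A)| = \sum_x d_A(x)\,d_{\overline A}(x)$. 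The decisive observation is that whenever $d_{\overline A}(x)\geq 1$ one has $d_A(x)\leq r$ (since the two degrees sum to exactly $r+1$), which gives $|E(A,\overline A)|\leq r\sum_{x\in\partial^+A} d_{\overline A}(x)$. Substituting and dividing by $(r+1)\binom{n}{r+1} = (n-r)\binom{n}{r}$ yields the desired lower bound on $\mu(\partial^+A)$.

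The lower-shadow bound follows by the symmetric argument on the bipartite incidence graph between $\binom{[n]}{r-1}$ and $\binom{[n]}{r}$: now $n-r+1$ plays the role of $r+1$, each Johnson edge is routed through its common \emph{lower} neighbor $y\wedge y'$, and the refined degree bound contributes $n-r$ in place of $r$, so after normalization the same denominator $r(n-r)$ re-emerges. The second inequality $\frac{n}{4r(n-r)}\geq\frac{1}{n}$ of the lemma is immediate from $4r(n-r)\leq n^2$ by AM-GM. The only subtle point I anticipate is the sharp degree estimate $d_A(x)\leq r$ (rather than the trivial $r+1$) at vertices $x$ incident to $A$--$\overline A$ edges; without this refinement one loses a factor of $(r+1)/r$ and ends up with the weaker denominator $(r+1)(n-r)$, falling just short of the clean constant stated in the lemma.
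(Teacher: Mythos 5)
The paper states this lemma with a citation to O'Donnell and Wimmer and gives no proof of its own, so there is nothing in the paper to compare against; what can be done is to verify your argument, and it checks out. The translation $\I[f]=4|E(A,\overline A)|/\bigl(n\binom{n}{r}\bigr)$ is correct: each crossing Johnson edge is realized by exactly one transposition $\{i,j\}$ and is counted once from each endpoint when ranging over $x$, with the factor $2$ in the definition of $\I_{ij}$ supplying the remaining doubling. The incidence count $(r+1)|\partial^+A| = (n-r)|A| + \sum_{x\in\partial^+A} d_{\overline A}(x)$ is the usual Kruskal--Katona bookkeeping, and routing each crossing edge through its unique join in $\binom{[n]}{r+1}$ gives $|E(A,\overline A)|=\sum_x d_A(x)\,d_{\overline A}(x)$; since a term contributes only when both degrees are at least $1$, the constraint $d_A(x)+d_{\overline A}(x)=r+1$ forces $d_A(x)\le r$ on exactly those terms, yielding $|E(A,\overline A)|\le r\sum_{x\in\partial^+A} d_{\overline A}(x)$. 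Dividing through by $(r+1)\binom{n}{r+1}=(n-r)\binom{n}{r}$ produces precisely the denominator $r(n-r)\binom{n}{r}$, the lower-shadow case is symmetric with $n-r$ replacing $r$, and $4r(n-r)\le n^2$ gives the last inequality. Your flagged ``subtle point'' is exactly the step that sharpens $(r+1)(n-r)$ to $r(n-r)$, and you handled it correctly. This is a clean, elementary edge-counting proof of the robust Kruskal--Katona inequality and serves well as a self-contained substitute for the proof the paper omits.
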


\begin{theorem}[Filmus~\cite{Filmus}]\label{slice_inf}
Let $f:{[n]\choose r}\rightarrow \{\pm1\}$. Then
\[\I[f]=\sum_{d \leq \min(r,n-r)}\frac{d(n+1-d)}{n} \cdot \W^d
 = \sum_{B\in \mathcal{B}_{n,d}, d \leq \min(r,n-r)}\frac{d(n+1-d)}{n} \cdot \hat{f}(B)^2\| \chi_B\|_2^2.\]

In addition, according to Parseval's identity,
\[\sum_d \W^d = \sum_{B\in \mathcal{B}_{n,d},d\leq \min(r,n-r)}\hat{f}(B)^2\| \chi_B\|_2^2=\| f\|_2^2=1.\]
\end{theorem}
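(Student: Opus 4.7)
The plan is to recognize the total influence as a Dirichlet form for a natural self-adjoint operator on the slice, and then to diagonalize that operator in the Young-Fourier basis. Since $f$ is $\{\pm 1\}$-valued, $2\Pr[f(x^{(i,j)})\neq f(x)] = \tfrac12\E[(f(x)-f(x^{(i,j)}))^2]$, so
\[
\I[f] = \frac{1}{2n}\sum_{1\leq i<j\leq n}\E\!\left[(f(x)-f(x^{(i,j)}))^2\right] = \frac{1}{n}\Braket{f,Lf},
\]
where $L=\sum_{i<j}(I-T_{ij})$ and $T_{ij}$ is the swap operator on coordinates $i,j$. This identity extends by bilinearity from $\pm 1$-valued $f$ to arbitrary real-valued $f$ on the slice.

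Next I would invoke the orthogonality of the basis $\{\chi_B\}$ (Theorem~\ref{thm:def}) and reduce the claim to the single eigenvalue computation
\[
L\chi_B \;=\; d(n+1-d)\,\chi_B \qquad\text{for every } B\in\mathcal{B}_{n,d}.
\]
Given this, Parseval yields $\Braket{f,Lf}=\sum_{B}\hat f(B)^{2}\,d(n+1-d)\,\|\chi_B\|_2^{2}$, and dividing by $n$ collapses into the stated per-level sum. The second assertion of the theorem, $\sum_d \W^d = 1$, follows instantly from orthogonality together with $f^2\equiv 1$ on the slice.

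The main obstacle is establishing the eigenvalue identity for $\chi_B$. I would proceed combinatorially: fix $B=(b_1,\ldots,b_d)\in\mathcal{B}_{n,d}$ and write $\chi_B=\sum_{A<B}\prod_{i=1}^d(x_{a_i}-x_{b_i})$. For each transposition $(i,j)$, $T_{ij}$ acts on the product by permuting the indices appearing in it, and the effect is nonzero only when $\{i,j\}$ meets the support of $A\cup B$. By grouping the resulting terms according to which factor $(x_{a_i}-x_{b_i})$ is altered, and exploiting the "top set" condition $A<B$ (so that many cancellations occur when resumming over the $A$'s), one isolates a diagonal contribution equal to $d(n+1-d)\chi_B$ plus off-diagonal pieces that telescope to zero. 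The cleanest route is to recognize $\chi_B$ as a harmonic polynomial of degree $d$ in the sense of Filmus --- meaning $\sum_i \partial_i\chi_B = 0$ after restriction to the slice --- and to use the classical fact that the random-transposition Laplacian on the Johnson scheme has eigenvalue $d(n+1-d)$ on the $d$-th harmonic space; a direct verification on $\chi_{A,B}$ then reduces to counting how many transpositions preserve the pairing between $A$ and $B$ versus breaking it.

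Once the eigenvalue identity is in hand, the rest of the proof is bookkeeping: substitute into $\I[f]=\tfrac1n\Braket{f,Lf}$, group the $B\in\mathcal{B}_{n,d}$ by $d$ to obtain the $\W^d$ form, and note that the sum is naturally truncated at $d\le\min(r,n-r)$ since this is the maximal $d$ for which $\mathcal{B}_{n,d}$ is nontrivial (for $r>n/2$ one uses Corollary~\ref{thm:def2} and the basis $\{\chi_{\bar B}\}$, which has the same norms and transforms identically under the Laplacian by the $x\mapsto\bar x$ symmetry).
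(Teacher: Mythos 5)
The paper itself does not prove this statement --- it is quoted verbatim from Filmus~\cite{Filmus} --- so there is no in-text proof to compare against, and your proposal should be judged on its own merits.

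Your outline is the standard (and, in fact, Filmus's own) approach, and the easy parts are executed correctly. The identity $\I[f]=\frac{1}{n}\Braket{f,Lf}$ with $L=\sum_{i<j}(I-T_{ij})$ is exactly right for $\{\pm1\}$-valued $f$, the deduction of the $\W^d$ formula from the eigenvalue identity via orthogonality is clean, the Parseval line giving $\sum_d\W^d=\|f\|_2^2=1$ is immediate, and the truncation at $d\le\min(r,n-r)$ together with the $r>n/2$ case via $x\mapsto\bar x$ are handled appropriately. Your content computation would also check out: $\binom{n}{2}$ minus the content of the two-row shape $(n-d,d)$ equals $d(n+1-d)$, which is the classical Diaconis--Shahshahani eigenvalue of the random-transposition Laplacian on the $d$-th isotypic piece.

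The one place where the proposal is genuinely thin is the eigenvalue identity $L\chi_B=d(n+1-d)\chi_B$, which you correctly flag as ``the main obstacle'' but then only sketch. The direct combinatorial route you gesture at (``many cancellations occur,'' ``off-diagonal pieces telescope to zero'') would, as stated, not convince a careful reader, because you do not say which terms cancel against which. The cleaner route you mention (harmonicity of $\chi_B$ plus the known spectral decomposition of the Johnson scheme) is valid but is itself a pair of nontrivial lemmas: one must show (i) that $\sum_i\partial_i\chi_B=0$, i.e.\ that $\chi_B$ lies in the $d$-th harmonic space, and (ii) that $L$ acts as a scalar on that space with the stated eigenvalue --- the first by exploiting the summation over all $A<B$ (this is where the top-set structure actually gets used), and the second by Schur's lemma on the multiplicity-free decomposition ${[n]\choose r}\cong\bigoplus_{d\le\min(r,n-r)}S^{(n-d,d)}$ together with the content formula. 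Both steps appear in Filmus's paper and are unlikely to be reconstructed from the phrase ``a direct verification on $\chi_{A,B}$ then reduces to counting''; if you intend this as a self-contained proof rather than a citation, you should write out at least one of these two lemmas in full.
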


\subsection{Basic Inequalities}
Finally, we will make use of the Hoeffding bound.
\begin{theorem}[Hoeffding Bound]
	Let $X=\sum_{i=1}^{n}X_i$, where $X_i\in[a_i,b_i]$ are independent random variables. Then for any $\theta>0$,
	\[\Pr(|X-\mathbb{E}(X)|\geq \theta)\leq 2\exp\left(-\frac{2\theta^2}{\Sigma_{i}(b_i-a_i)^2}\right).\]
\end{theorem}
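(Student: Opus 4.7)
The plan is to prove the Hoeffding bound by the standard Chernoff-method argument: apply the exponential Markov inequality to the centered sum, bound each individual moment generating function via Hoeffding's lemma, use independence to multiply, and then optimize the free parameter.

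First, I would reduce to centered variables. Set $Y_i = X_i - \E[X_i]$, so $Y_i$ has mean $0$ and lies in an interval of the same width $b_i - a_i$, and let $Y = \sum_i Y_i = X - \E[X]$. By applying the argument to both $Y$ and $-Y$ and taking a union bound, it suffices to show $\Pr[Y \geq \theta] \leq \exp\bigl(-2\theta^2 / \sum_i (b_i - a_i)^2\bigr)$; the factor of $2$ in the claim then accommodates the two tails. For any $t > 0$, by Markov applied to $e^{tY}$ and by independence,
\[
\Pr[Y \geq \theta] \;\leq\; e^{-t\theta}\,\E[e^{tY}] \;=\; e^{-t\theta}\prod_{i=1}^n \E[e^{tY_i}].
\]

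Second, I would establish Hoeffding's lemma: if $Z$ is a mean-zero random variable with $Z \in [a,b]$, then $\E[e^{tZ}] \leq \exp\bigl(t^2(b-a)^2/8\bigr)$. The cleanest route is convexity: since $e^{tz}$ is convex in $z$, for $z \in [a,b]$ we have $e^{tz} \leq \tfrac{b-z}{b-a}\,e^{ta} + \tfrac{z-a}{b-a}\,e^{tb}$. Taking expectations and using $\E[Z] = 0$ gives $\E[e^{tZ}] \leq \tfrac{b}{b-a}e^{ta} - \tfrac{a}{b-a}e^{tb}$. Writing this as $e^{\phi(u)}$ with $u = t(b-a)$, a direct computation shows $\phi(0) = \phi'(0) = 0$ and $\phi''(u) \leq 1/4$ uniformly, so Taylor's theorem yields $\phi(u) \leq u^2/8$, proving the lemma.

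Third, plugging the lemma into the product bound gives
\[
\Pr[Y \geq \theta] \;\leq\; \exp\!\left(-t\theta + \frac{t^2}{8}\sum_{i=1}^n (b_i - a_i)^2\right),
\]
and choosing $t = 4\theta / \sum_i (b_i - a_i)^2$ to minimize the exponent yields the one-sided bound $\exp\bigl(-2\theta^2/\sum_i (b_i-a_i)^2\bigr)$. Combining with the symmetric argument for the lower tail gives the stated two-sided inequality.

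The main obstacle is getting the sharp constant $1/8$ in Hoeffding's lemma (equivalently, the constant $2$ in the final exponent); weaker constants fall out of cruder bounds on $\E[e^{tZ}]$, but the tight version requires the convexity-plus-Taylor analysis of $\phi$ described above. Everything else is routine manipulation of the Chernoff bound.
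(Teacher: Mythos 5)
Your proof is correct and is the standard textbook derivation of Hoeffding's inequality via the Chernoff method: center the variables, apply the exponential Markov inequality, bound each factor by Hoeffding's lemma (the convexity-plus-Taylor argument to get the sharp $t^2(b-a)^2/8$ in the exponent), multiply using independence, optimize $t$, and symmetrize for the two-sided bound. The arithmetic checks out: with $t = 4\theta/\sum_i(b_i-a_i)^2$ the exponent $-t\theta + (t^2/8)\sum_i(b_i-a_i)^2$ evaluates to $-2\theta^2/\sum_i(b_i-a_i)^2$, matching the claimed constant.

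Note, however, that the paper does not actually prove this theorem; it simply states the Hoeffding bound as a known classical fact in its ``Basic Inequalities'' subsection and uses it (via \Cref{Coro:Heff}) as a black box. So there is no in-paper proof to compare against. Your write-up would be a perfectly acceptable proof to append if one were wanted; the only thing I would tighten in a final version is to spell out the computation $\phi''(u) \leq 1/4$ rather than asserting it, since that is where the sharp constant is earned, but as a proof plan the argument is sound and complete.
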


\begin{corollary}\label{Coro:Heff}
	Let $X$ be a random variable with distribution $\mathcal{D}$ whose range is $[l,u]$. Let $X_1,\ldots,X_m$ be its independent samples. Then w.p. $1-\delta$, for any $\epsilon>0$,
    \[\abs{\frac{1}{m}\sum_{i=1}^{m}X_i-\mathbb{E}(X)}\leq \epsilon\]
	as long as $m\geq (u-l)^2\log(2/\delta)/(2\epsilon^2)$.
\end{corollary}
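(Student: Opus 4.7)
The plan is to apply the Hoeffding bound stated just above directly to the random variables $X_1,\ldots,X_m$, which are i.i.d.\ copies of $X$ and therefore all take values in the common interval $[a_i,b_i]=[l,u]$. First I would set $Y=\sum_{i=1}^m X_i$, so that $\mathbb{E}(Y)=m\,\mathbb{E}(X)$, and observe that the event we care about, $\bigl|\tfrac{1}{m}\sum_i X_i-\mathbb{E}(X)\bigr|>\epsilon$, is exactly $|Y-\mathbb{E}(Y)|>m\epsilon$.

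Next I would plug $\theta=m\epsilon$ into the Hoeffding bound with $\sum_i(b_i-a_i)^2=m(u-l)^2$, obtaining
\[
\Pr\!\left(\left|\tfrac{1}{m}\sum_{i=1}^m X_i-\mathbb{E}(X)\right|>\epsilon\right)\le 2\exp\!\left(-\frac{2m\epsilon^2}{(u-l)^2}\right).
\]
Finally I would solve for $m$ making the right-hand side at most $\delta$: taking logarithms, the condition becomes $\tfrac{2m\epsilon^2}{(u-l)^2}\ge \log(2/\delta)$, i.e.\ $m\ge (u-l)^2\log(2/\delta)/(2\epsilon^2)$, which is precisely the hypothesis. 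Complementing gives the claimed $1-\delta$ lower bound.

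There is no real obstacle: the corollary is the textbook rewriting of Hoeffding's inequality for an empirical mean, and the whole argument is a one-line substitution plus solving an inequality for $m$.
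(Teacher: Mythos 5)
Your proposal is correct and is exactly the standard one-line substitution into Hoeffding's inequality; the paper itself states this corollary without proof precisely because the derivation is this routine. (One harmless technicality: since the paper declares all logarithms base~$2$, the exact threshold from Hoeffding would be $(u-l)^2\ln(2/\delta)/(2\epsilon^2)$, which is smaller than the stated $(u-l)^2\log(2/\delta)/(2\epsilon^2)$, so the corollary's hypothesis is simply a bit stronger than strictly necessary and the conclusion still follows.)
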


\warn{
The following fact will also be used.
\begin{proposition}\label{prop:binomial}
For $t=o(n)$, ${n\choose n/2-t/2}/2^n=\frac{1}{\sqrt{n}}\cdot 2^{-O(t^2/n)}$.
\end{proposition}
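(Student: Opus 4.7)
The plan is a routine application of Stirling's formula combined with a Taylor expansion of $\ln(1\pm x)$. Assuming $n$ and $t$ have the same parity (otherwise the statement is read for the nearest integer to $n/2-t/2$, which changes the expression only by a bounded factor), first I would write
\[
\frac{1}{2^n}\binom{n}{n/2-t/2} \;=\; \frac{n!}{2^n\,a!\,b!}, \qquad a=\frac{n-t}{2},\ b=\frac{n+t}{2}.
\]

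Next I would apply Stirling $m!=\sqrt{2\pi m}\,(m/e)^m(1+O(1/m))$ to each factorial. The $e^n$ in the numerator cancels against $e^{a+b}=e^n$ in the denominator, leaving two factors to analyze. The polynomial prefactor equals $\sqrt{n/(2\pi ab)}\cdot(1+o(1))$, and since $ab=(n^2-t^2)/4=\Theta(n^2)$ for $t=o(n)$, this prefactor is exactly $\Theta(1/\sqrt{n})$, producing the $1/\sqrt{n}$ in the claim.

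The remaining piece is the exponential factor $n^n/(2^n a^a b^b)$. Writing $a=(n/2)(1-t/n)$ and $b=(n/2)(1+t/n)$, its natural logarithm equals $-a\ln(1-t/n)-b\ln(1+t/n)$. Plugging in $\ln(1\pm x)=\pm x - x^2/2 \pm O(x^3)$ with $x=t/n$, the linear-in-$t$ contributions $t/2$ and $-t/2$ cancel, and a short bookkeeping of the remaining terms gives
\[
-a\ln(1-t/n)-b\ln(1+t/n) \;=\; -\frac{t^2}{2n} + O\!\left(\frac{t^3}{n^2}\right).
\]
Under the hypothesis $t=o(n)$ the cubic remainder is $o(t^2/n)$, so this factor equals $\exp(-\Theta(t^2/n))=2^{-O(t^2/n)}$. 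Multiplying by the $\Theta(1/\sqrt n)$ prefactor yields the claimed form.

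The only thing that needs attention is the control of error terms: the Stirling corrections $(1+O(1/m))$ are $o(1)$ because $n$, $a$, $b$ all tend to infinity in the regime of interest, and the cubic Taylor remainder is absorbed into the $t^2/n$ term precisely because $t=o(n)$. There is no deeper obstacle; the argument is a careful bookkeeping of these estimates.
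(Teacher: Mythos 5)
Your proof is correct and is essentially the same as the paper's: both apply Stirling's formula and a Taylor expansion around the midpoint; the paper merely packages Stirling via the binary entropy function $H$ and expands $H(1/2 - t/(2n))$, which is identical to your expansion of $-a\ln(1-t/n) - b\ln(1+t/n)$.
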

\begin{proof}
By Stirling's approximation,
\[
\log {n\choose n/2-t/2}=\log(1+o(1))+\log\sqrt{\frac{n}{2\pi(n/2-t/2)(n/2+t/2)}}+n\cdot H\left(\frac{n/2-t/2}{n}\right),
\]
where $H(p)=-p\log p-(1-p)\log(1-p)$ is the binary entropy function.
As $t/2n=o(1)$, by the Taylor expansion of the entropy function around $1/2$, we have 
\[
H\left(\frac{1}{2}-\frac{t}{2n}\right)=1-\frac{1+o(1)}{2\ln 2}\left(\frac{t}{n}\right)^2.
\]
The conclusion follows immediately.
\end{proof}
}

\section{Concentration Property of $k$-Monotone Functions}
\warn{In the rest of this paper, for a function $f:\{0,1\}^n\rightarrow\{0,1\}$, we convert the range to $\{\pm 1\}$. 
The mapping from $\{0,1\}$ to $\{-1,1\}$ is given by $1-2b$, sending $0$ to $1$ and $1$ to $-1$. So a function $f:\{0,1\}^n\rightarrow\{\pm 1\}$ is said to be $k$-monotone if $(1-f)/2$ is $k$-monotone.}

In this section, we show some ``middle" slice
of a $k$-monotone function has Fourier concentration. 
For functions $f:\{0,1\}^n\rightarrow\{\pm1\}$, let $f|_r$ be the subfunction of $f$ restricted to ${[n]\choose r}$ and $\mu(f|_r):=\mu(f|_r^{-1}(-1))$.

\begin{definition}[$(t,d,\epsilon)$-concentration]
We say $f:\{0,1\}^n\rightarrow\{\pm1\}$ is $(t,d,\epsilon)$-concentrated if the following holds: for some $r$ such that $n/2-t/2\leq r\leq n/2+t/2$,
$$\W^{>d}(f|_r)=\sum_{B\in \mathcal{B}_{n,d'}: d'> d }\widehat{f|_r}(B)^2\| \chi_B\|_2^2 < \epsilon.$$
\warn{Intuitively, $f$ has low-degree Fourier concentration on at least one of the middle slices.}
\end{definition}

\begin{lemma} \label{lm:main}
Let $f:\{0,1\}^n\rightarrow\{\pm1\}$ be a $k$-monotone function.
For any $1<t \leq n$ and any $d$, $\epsilon$ such that $d\epsilon \geq 2kn/t$,  $f$ is $(t, d, \epsilon)$-concentrated.
\end{lemma}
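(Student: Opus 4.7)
The plan is to exhibit a slice $r^\ast$ with $n/2-t/2\le r^\ast\le n/2+t/2$ on which $f|_{r^\ast}$ has small total influence, and then to convert that influence bound into low-degree Fourier concentration via Theorem~\ref{slice_inf}. Since $(d+1)(n-d)/n\ge d/2$ whenever $d\le n/2$, the identity
\[\I[f|_r]=\sum_{d'\le \min(r,n-r)}\tfrac{d'(n+1-d')}{n}\,\W^{d'}(f|_r)\]
gives $\W^{>d}(f|_r)\le 2\I[f|_r]/d$. (The case $d>n/2$ is vacuous: on any middle slice the Young--Fourier expansion has maximum degree $\min(r,n-r)\le n/2$, so $\W^{>d}(f|_r)=0<\epsilon$.) It therefore suffices to find a middle $r^\ast$ with $\I[f|_{r^\ast}]\le kn/t$.

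To bound $\I[f|_r]$ I would use the XOR decomposition $f=g_1\oplus\cdots\oplus g_k$ into monotone pieces. For any transposition $(i,j)$, the event $f(x^{(i,j)})\ne f(x)$ holds iff an odd number of the $g_\ell$'s flip, so a union bound gives $\I_{ij}[f|_r]\le \sum_{\ell=1}^k \I_{ij}[g_\ell|_r]$; averaging over $(i,j)$ yields $\I[f|_r]\le \sum_\ell \I[g_\ell|_r]$. Everything reduces to bounding the sum of $\I[g|_r]$ over middle slices for a single monotone $g$.

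Fix a monotone $g:\{0,1\}^n\to\{\pm 1\}$ with $-1$ marking its up-set, and set $A_r=(g|_r)^{-1}(-1)$. Monotonicity of $g$ implies the containment $\partial^+A_r\subseteq A_{r+1}$, hence $\mu(\partial^+A_r)\le \mu(A_{r+1})$. Combining this with Lemma~\ref{key} (the shadow lemma, which supplies $\mu(\partial^+A_r)\ge \mu(A_r)+\I[g|_r]/n$) yields the local bound
\[\I[g|_r]\le n\bigl(\mu(A_{r+1})-\mu(A_r)\bigr).\]
Summing over the (at least $t$) integers $r$ in $[n/2-t/2,n/2+t/2]$ telescopes to a total of at most $n$, and the union bound then gives $\sum_r \I[f|_r]\le kn$ across those slices; pigeonhole delivers a middle $r^\ast$ with $\I[f|_{r^\ast}]\le kn/t$. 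Plugging into the first paragraph gives $\W^{>d}(f|_{r^\ast})\le 2kn/(td)\le\epsilon$ precisely when $d\epsilon\ge 2kn/t$, as required.

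The main conceptual step is the third paragraph: one cannot telescope $\I[f|_r]$ directly, because for a non-monotone $f$ the density $\mu((f|_r)^{-1}(-1))$ need not be monotone in $r$, so the shadow-containment $\partial^+A_r\subseteq A_{r+1}$ can fail. It is exactly the monotonicity of each piece $g_\ell$ that lets the shadow lemma turn slice-level influence into a telescoping increment of slice-density, which is the heart of the argument.
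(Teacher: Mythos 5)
Your argument is correct and follows the same route as the paper's: bound $\I[f|_r]\le\sum_\ell\I[g_\ell|_r]$ via the XOR decomposition and a union bound, telescope $\sum_r\I[g|_r]$ for each monotone piece using Lemma~\ref{key} together with the containment $\partial^+A_r\subseteq A_{r+1}$, average to find a middle slice with $\I[f|_{r^\ast}]\le kn/t$, and convert that influence bound into $\W^{>d}$-decay through Theorem~\ref{slice_inf}. The only cosmetic differences are that you present the argument directly instead of by contradiction and that you inline the proof of what the paper isolates as Proposition~\ref{prop:main} (the bound $\sum_r\I[f|_r]\le kn$).
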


Lemma \ref{lm:main} follows from an upper bound on the sum of total influences on slices.
\begin{proposition}\label{prop:main}
Let $f:\{0,1\}^n\rightarrow\{\pm1\}$ be a $k$-monotone function. Then $\sum_{r=0}^{n-1} \I[f|_r]\leq kn$.
\end{proposition}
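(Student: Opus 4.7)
The plan is to reduce the $k$-monotone case to the monotone case via the $\mathsf{XOR}$ decomposition, and then use the O'Donnell--Wimmer ``key lemma'' (Lemma \ref{key}) in a telescoping fashion along the levels of the cube.

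First, by the structural decomposition stated right before Definition~\textit{(k-monotone function)}, write $f = g_1 \oplus g_2 \oplus \cdots \oplus g_k$ with each $g_i:\{0,1\}^n \to \{\pm 1\}$ monotone. The key observation is that $\I[f|_r] \leq \sum_{i=1}^k \I[g_i|_r]$ for every $r$. Indeed, for $x \in \binom{[n]}{r}$ and any transposition $(i,j)$, if $f(x) \neq f(x^{(i,j)})$ then (by the $\mathsf{XOR}$ identity) an odd number of the $g_i$'s must also disagree on $\{x, x^{(i,j)}\}$; hence the indicator event for $f$ differing is dominated by the sum of indicator events for $g_1, \dots, g_k$ differing. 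Averaging yields the inequality. So it suffices to prove $\sum_{r=0}^{n-1} \I[g|_r] \leq n$ for any single monotone $g:\{0,1\}^n\to\{\pm 1\}$.

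For monotone $g$, let $A_r := \{x \in \binom{[n]}{r} : g(x) = -1\}$ be the ``accepting'' set on the $r$-th slice under the $1-2b$ convention (so that $A_r$ corresponds to the preimage of $1$ of the underlying Boolean-valued monotone function). Monotonicity of $g$ directly gives $\partial^+ A_r \subseteq A_{r+1}$, and hence $\mu(\partial^+ A_r) \leq \mu(A_{r+1})$. Applying Lemma~\ref{key} to the slice function $g|_r$,
\begin{equation*}
\I[g|_r] \;\leq\; n\bigl(\mu(\partial^+ A_r) - \mu(A_r)\bigr) \;\leq\; n\bigl(\mu(A_{r+1}) - \mu(A_r)\bigr).
\end{equation*}
Summing this telescopes beautifully:
\begin{equation*}
\sum_{r=0}^{n-1} \I[g|_r] \;\leq\; n\sum_{r=0}^{n-1}\bigl(\mu(A_{r+1}) - \mu(A_r)\bigr) \;=\; n\bigl(\mu(A_n) - \mu(A_0)\bigr) \;\leq\; n.
\end{equation*}
Plugging this back, $\sum_{r=0}^{n-1} \I[f|_r] \leq \sum_{i=1}^k \sum_{r=0}^{n-1} \I[g_i|_r] \leq kn$, as desired.

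The two ingredients that do all the work are (a) the fact that $\mathsf{XOR}$ of $k$ functions sub-additively combines bichromatic-edge counts, and (b) the shadow containment $\partial^+ A_r \subseteq A_{r+1}$ provided by monotonicity, which converts the one-sided bound of Lemma~\ref{key} into a telescoping sum. The main conceptual obstacle I anticipated was obtaining a global bound (summed over all slices) from a per-slice inequality --- but Lemma~\ref{key}, which bounds $\I[g|_r]$ by the density gap to a \emph{neighboring} slice, is precisely tailored for this telescoping, so no further hard estimates are needed.
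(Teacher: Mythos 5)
Your proof is correct and follows essentially the same route as the paper: decompose $f$ into a parity of $k$ monotone functions, bound $\I[f|_r]$ by $\sum_i \I[g_i|_r]$ via a union bound over transpositions, then apply Lemma~\ref{key} together with the shadow containment $\partial^+ A_r \subseteq A_{r+1}$ (from monotonicity) to get a telescoping sum over slices. The only cosmetic difference is that you phrase the per-slice bound as $\I[g|_r] \leq n(\mu(A_{r+1}) - \mu(A_r))$ before summing, whereas the paper writes it as $\mu(h_i|_{r+1}) \geq \mu(h_i|_r) + \frac{1}{n}\I[h_i|_r]$, which are the same inequality.
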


We first prove Lemma~\ref{lm:main} using Proposition~\ref{prop:main}.

\begin{proof}[Proof of Lemma~\ref{lm:main}]
By contradiction, we assume that $\W^{>d}(f|_r)\ge\epsilon\ge \frac{2kn}{d(t-1)}$ for any $n/2-t/2\leq r\leq n/2+t/2$.
  According to Proposition~\ref{prop:main},  we have $\sum_{r=\lceil n/2-t/2\rceil}^{\floor{n/2+t/2}} \I[f|_r]\leq kn$. By averaging, let $n/2 - t/2\leq r\leq n/2 + t/2$ be such that
  $\I[f|_r] \le kn/t$.
  By Theorem~\ref{slice_inf}, we can deduce that,
  \begin{align*}
    \frac{kn}{t} \ge \I[f|_r] & =\sum_{d' \le \min(r,n-r)}\frac{d'(n+1-d')}{n} \cdot \W^{d'}(f|_r)\\
     & \ge \sum_{d < d' \le \min(r,n-r)}\frac{d'(n+1-d')}{n}\cdot \W^{d'}(f|_r) \\
     & \ge \frac{d(n+1-d)}{n} \cdot \sum_{d < d' \le \min(r, n-r)}\W^{d'}(f|_r) \\
     & > \frac{d}{2} \cdot \epsilon\geq \frac{d}{2} \cdot \frac{2kn}{dt}\geq \frac{kn}{t} \; ,
  \end{align*}
a contradiction.
\end{proof}

Now we prove Proposition~\ref{prop:main}.

\begin{proof}[Proof of Proposition~\ref{prop:main}]
    Suppose $f$ is the parity of $h_1,\cdots,h_k$ where each $h_i$ is monotone.
    For any $r$, when we switch $x_i$ and $x_j$, $f|_r(x)$ changes only if at least one $h_i|_r(x)$ changes for $i=1,2,\dots, k$. Thus, combining with the union bound, we have
    \begin{align}\label{eq:sec3_1}
    \I[f|_r]\leq \sum_{i=1}^k\I[h_i|_r].
    \end{align}
    
    \warn{Since $h_i$ is monotone, the upper shadow of $h_i|^{-1}_r(-1)$ is a subset of $h_i|^{-1}_{r+1}(-1)$. Then according to Lemma~\ref{key}, we have
    \[
    \mu(h_i|_{r+1})\geq \mu(h_i|_r)+\frac{1}{n}I(h_i|_r),
    \]
    which implies 
    \begin{align}\label{eq:sec3_2}
    \frac{1}{n}\sum_{r=0}^{n-1} \I[h_i|_r]\leq \mu(h_i|_{n})-\mu(h_i|_{0})\leq 1.
    \end{align}
    
    Inequalities (\ref{eq:sec3_1}) and (\ref{eq:sec3_2}) imply the desired conclusion.}
\end{proof}
 
\section{Distinguishers for $k$-monotone Functions}

In this section, we prove the following theorem.

\distinguisher*

We prove this theorem by giving a distinguisher for $(t,d,1/2)$-concentrated functions.

\begin{proposition}
For $t\leq n/4$ and $d=o(n/\log n)$, any family of $(t,d,1/2)$-concentrated functions can be distinguished from uniform random functions in
$2^{O(d\log n + t^2/n)}$ time.
\end{proposition}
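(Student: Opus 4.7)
The plan is to run a slice-based variant of the Linial--Mansour--Nisan Low-Degree Algorithm. For each of the $O(t)$ \emph{middle slices} $r\in[\lceil n/2-t/2\rceil, \lfloor n/2+t/2\rfloor]$, I would estimate the low-degree Fourier weight $\W^{\le d}(f|_r)$ from the random samples, and accept iff some estimate exceeds the threshold $1/3$.

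First I would partition the samples by Hamming weight. By Proposition~\ref{prop:binomial}, a uniform random $x\in\{0,1\}^n$ lands on any given middle slice with probability at least $2^{-O(t^2/n)}/\sqrt{n}$, so $M = 2^{O(d\log n + t^2/n)}$ samples yield $m = 2^{\Omega(d\log n)}$ samples per middle slice (after a Chernoff bound). Next I would estimate $\W^{\le d}(f|_r)$ via the reproducing-kernel identity
\[
\W^{\le d}(f|_r) \;=\; \E_{x,y}\!\left[f(x)\,f(y)\,K_d(x,y)\right],\qquad K_d(x,y):=\sum_{d'\le d}\sum_{B\in\mathcal{B}_{n,d'}}\frac{\chi_B(x)\,\chi_B(y)}{\|\chi_B\|_2^2},
\]
where $x,y$ are independent uniform over the slice (this identity is immediate from the orthogonality in Theorem~\ref{thm:def}). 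From Theorem~\ref{thm:def}, on middle slices $\|\chi_B\|_\infty\le n^{O(d)}$, $\|\chi_B\|_2^2\ge 2^{-O(d)}$, and $|\mathcal{B}_{\le d}|\le n^{O(d)}$, so $|K_d(x,y)|\le 2^{O(d\log n)}$. Averaging $f(x_i)f(y_j)K_d(x_i,y_j)$ over pairs of slice-$r$ samples then yields, by Hoeffding (Corollary~\ref{Coro:Heff}), an estimator accurate to additive $1/12$ whenever $m = 2^{O(d\log n)}$.

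For completeness, if $f$ is $(t,d,1/2)$-concentrated then $\W^{>d}(f|_r)<1/2$ for some middle $r$; Parseval (Theorem~\ref{slice_inf}) then gives $\W^{\le d}(f|_r)>1/2$ on that slice, so its estimate exceeds $1/3$. For soundness, when $R$ is a uniformly random function the restriction $R|_r$ is a uniform $\pm 1$ function on the slice, and a direct calculation using orthogonality of the basis $\chi_B$ gives
\[
\E_R\!\left[\W^{\le d}(R|_r)\right] \;=\; \frac{|\mathcal{B}_{\le d}|}{\binom{n}{r}} \;\le\; 2^{O(d\log n)-n+O(t^2/n)}\sqrt{n} \;=\; 2^{-\Omega(n)},
\]
using Proposition~\ref{prop:binomial} together with the hypotheses $d=o(n/\log n)$ and $t\le n/4$. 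Markov's inequality together with a union bound over the $O(t)$ middle slices ensures $\W^{\le d}(R|_r)<1/6$ for \emph{all} middle $r$ with overwhelming probability, so no slice estimate reaches $1/3$.

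I expect the main obstacle to be careful bookkeeping of the additive $1/12$ estimation error: the kernel $K_d$ is exponentially large in absolute value, so one must combine the Hoeffding bound with a union bound over the $O(t)=\mathrm{poly}(n)$ slices so that the empirical estimates are simultaneously accurate. Once this is done, the running time is polynomial in $M=2^{O(d\log n+t^2/n)}$, matching the claimed bound.
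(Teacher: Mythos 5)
Your proposal is correct and gives the same $2^{O(d\log n + t^2/n)}$ bound, but it differs from the paper's proof in two respects. First, to estimate $\W^{\le d}(f|_r)$ you use a single reproducing-kernel estimator $\E_{x,y}[f(x)f(y)K_d(x,y)]$ (a U-statistic of order two; be a little careful to argue its concentration since the pairs $(x_i,y_j)$ drawn from one pool are not mutually independent, though this is standard), whereas the paper estimates each inner product $\langle f|_r,\chi_B\rangle$ separately to accuracy $n^{-Cd}$, squares, rescales by $\|\chi_B\|_2^{-2}$, and sums over the $n^{O(d)}$ top sets $B$. The two estimators are equivalent up to bookkeeping and yield the same sample complexity. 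Second, for soundness the paper applies Hoeffding's inequality over a random $\pm 1$ function $f$ to show that each $|\langle f|_r,\chi_B\rangle|$ is exponentially small, then union-bounds over all $B$ and $r$; your argument instead computes $\E_R[\W^{\le d}(R|_r)]=|\mathcal{B}_{\le d}|/\binom{n}{r}$ exactly (using orthogonality and $\E_R[R(x)R(y)]=\delta_{xy}$) and invokes Markov plus a union bound over the $O(t)$ slices. Your soundness route is arguably a bit cleaner because it avoids working coefficient-by-coefficient and makes the exponentially small expectation transparent; the paper's route, while more verbose, keeps the same per-coefficient estimation machinery used in the completeness analysis. Both establish the same completeness guarantee via $\W^{\le d}(f|_r)>1/2$ on the concentrated slice and an additive error small enough that the threshold separates the two cases.
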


By Lemma~\ref{lm:main}, every $k$-monotone function is \warn{ $\left((kn^2\log{n})^{1/3},4\big(\frac{k^2n}{\log n}\big)^{1/3},1/2\right)$}-concentrated, then Theorem \ref{thm:distinguisher} follows. Now we prove the proposition.

\begin{proof}
The distinguisher is given in Algorithm~\ref{alg:distinguisher}. We'll show that
\begin{itemize}
    \item ({\bf Soundness}) It accepts a uniform random function w.p. $o(1)$;
    \item ({\bf Completeness}) It accepts any $(t,d,1/2)$-concentrated function w.p. $1-o(1)$;
    \item ({\bf Complexity}) Its sample/time complexity is $2^{O(d\log n + t^2/n)}$.
\end{itemize}

\begin{algorithm}[t]
\caption{A Distinguisher for $(t,d,1/2)$-Concentrated Functions}\label{alg:distinguisher}

    Let $C$ be a large enough constant\;
    \For{$r\leftarrow \lceil\frac{n}{2}-\frac{t}{2}\rceil$ \KwTo $\lfloor\frac{n}{2}+\frac{t}{2}\rfloor$}
 {
     $S\leftarrow 0$\;
     \For{$B\in \mathcal{B}_{n,d'}$ with $d'\leq d$}
     {
          \If {$r\leq n/2$}
           {Estimate $\Braket{f|_r, \chi_B}$ with accuracy $n^{-C\cdot d}$\;
           $S\leftarrow S+\widehat{f|_r}(B)^2\| \chi_B\|_2^2$\;
          \tcp{$\widehat{f|_r}(B)^2\| \chi_B\|_2^2 = \Braket{f|_r, \chi_B}^2/\|\chi_B\|_2^2$}}
          \Else {Estimate $\Braket{f|_r, \chi_{\bar{B}}}$ with accuracy $n^{-C\cdot d}$\;
           $S\leftarrow S+\widehat{f|_r}(\bar{B})^2\| \chi_{\bar{B}}\|_2^2$\;
          \tcp{$\widehat{f|_r}(B)^2\| \chi_{\bar{B}}\|_2^2= \Braket{f|_r, \chi_{\bar{B}}}^2/\|\chi_{\bar{B}}\|_2^2$}}
     }
     \If{$S\geq 3/8$}
        {Return True\;}
 }
 Return False\;

\end{algorithm}

\vspace{2ex}
\noindent\textbf{Soundness.}\quad Let $f$ be a uniform random function. We claim that for each $\frac{n}{2}-\frac{t}{2}\leq r\leq \frac{n}{2}+\frac{t}{2}$, the variable $S$ in Line 11 is at most $1/4$ w.p. $1-o(\frac{1}{n})$, which concludes the soundness by the union bound. 

Fix such an $r$. W.l.o.g., we assume that $r\leq n/2$. For any $B\in\mathcal{B}_{n,d'}$ where $d'\leq d$, it is easily seen that $\E_f\big[\Braket{f|_r, \chi_B}\big]=0$, then by the Hoeffding bound,
\[
    \Pr_f\left[\left|\Braket{f|_r, \chi_B}\right|\ge \theta\right]\leq 2 \exp\left(-2\theta^2 {n \choose r} \big/ \|\chi_B\|^2_2\right)\leq 2\exp\left(-\theta^2 \left(\frac{4}{3}\right)^{n/4-o(n)}\right),
\]
where the last inequality is due to that ${n \choose r} \geq (\frac{n}{r})^r \geq (\frac{4}{3})^{n/4}$ and $\|\chi_B\|^2_2=2^{O(d\log n)}=(\frac{4}{3})^{o(n)}$.
In particular, by letting $\theta=\left(\frac{3}{4}\right)^{n/10}$ and using the union bound, we have that with probability at least $1-n^{O(d)}\cdot \exp\left(-\theta^2 \left(\frac{4}{3}\right)^{n/4-o(n)}\right)=1-o(\frac{1}{n})$, $\big|\Braket{f|_r, \chi_B}\big|< \theta$ for every $B\in\mathcal{B}_{n,d'}$ where $d'\leq d$. Thus, w.p. $1-o(\frac{1}{n})$,
\begin{align*}
    \W^{\leq d}[f|_r]&=\sum_{B\in\mathcal{B}_{n,d'},d'\leq d} \widehat{f|_r}(B)^2\|\chi_B\|_2^2=\sum_{B\in\mathcal{B}_{n,d'},d'\leq d} \frac{\Braket{f|_r, \chi_B}^2}{\|\chi_B\|_2^2}
 \leq \sum_{B\in\mathcal{B}_{n,d'},d'\leq d} \frac{\theta^2}{\|\chi_B\|_2^2}\leq \frac{1}{8},
\end{align*}
where the last inequality holds for sufficiently large $n$. Finally, $S$ is an estimate of $\W^{\leq d}[f|_r]$ with additive error $n^{-\Omega(d)}$.

\vspace{2ex}
\noindent\textbf{Completeness.}\quad Let $f$ be a $(t,d,1/2)$-concentrated function. By definition, there is some $r$ such that $n/2-t/2\leq r\leq n/2+t/2$ and $W^{\leq d}[f|_r]>1/2$. As $S$ is an estimate of $\W^{\leq d}[f|_r]$ with additive error $n^{-\Omega(d)}$, we conclude that Algorithm~\ref{alg:distinguisher} accepts $f$ with high probability.

\vspace{2ex}
\noindent\textbf{Complexity.}\quad The loop in Line~2 is repeated at most $t$ times. In Line~4, the number of strings $B\in \mathcal{B}_n$ of length at most $d$ we enumerated is at most $n^{O(d)}$.
Furthermore, for each $n/2-t/2\leq r\leq n/2+t/2$ and each $B\in\mathcal{B}$ of size $\leq d$, according to the Hoeffding bound, $n^{O(d)}$ uniform random samples on the slice ${[n]\choose r}$ are sufficient to estimate $\Braket{f|_r, \chi_B}$ with accuracy $n^{-C\cdot d}$. In addition, a random uniform sample is from the slice ${[n]\choose r}$ with probability ${n\choose r}/2^n$, which is $\frac{1}{\sqrt{n}}\cdot2^{-O(t^2/n)}$ according to Proposition \ref{prop:binomial}. Thus, the total number of random samples used is at most $t \cdot n^{O(d)}\cdot n^{O(d)}\cdot 2^{O(t^2/n)}=2^{O(d\log n+t^2/n)}$.

Besides, the function $\chi_B=\sum_{A\in \mathcal{S}_{n,d}:A<B}\chi_{A,B}$ can be computed by enumerating all $n^{O(d)}$ strings $A$ in $\mathcal{S}_{n,d}$.
Thus, the time complexity is also $2^{O(d\log n + t^2/n)}$.
\end{proof}

\section{Weak Learners for $k$-monotone Functions}
In this section, we prove the following theorem.

\learner*

We prove Theorem \ref{thm:learner} by giving a weak learner for $(t,d,1/2)$-concentrated functions. By Lemma \ref{lm:main}, $k$-monotone functions are $\left(\sqrt{n\log n},4k\sqrt{\frac{n}{\log n}},1/2\right)$-concentrated, then Theorem \ref{thm:learner} follows.

\begin{proposition}\label{prop:learner}
    For $t=O(\sqrt{n\log n})$ and $d=o(n/\log n)$, Algorithm \ref{alg:learner} weakly learns $(t,d,1/2)$-concentrated functions in $2^{O(d\log n + t^2/n)}$ time.
\end{proposition}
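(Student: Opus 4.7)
The plan is to convert the distinguisher of Proposition~4.1 into an explicit weak learner by reusing its slice-by-slice Young-Fourier estimation but, instead of thresholding the cumulative mass $S$, assembling the estimated low-degree polynomial on each middle slice into a $\pm 1$ hypothesis. Concretely, for each $r \in [\lceil n/2 - t/2\rceil, \lfloor n/2 + t/2\rfloor]$ the algorithm estimates every coefficient $\widehat{f|_r}(B)$ with $B \in \mathcal{B}_{n,d'}$, $d' \le d$ (using the $\chi_{\bar B}$ basis when $r > n/2$), thereby forming an empirical low-degree polynomial $\tilde g_r$. The candidate hypothesis is $h_r(x) = \mathrm{sign}(\tilde g_r(x))$ when $|x| = r$ and a fixed default ($+1$, say) otherwise. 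After iterating over all middle slices, $\mathrm{poly}(n)$ fresh samples empirically estimate each correlation $\mathbb{E}[h_r(x) f(x)]$ to $1/\mathrm{poly}(n)$ accuracy, and the $h_r$ with the largest observed correlation is returned.

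For correctness, let $r^*$ be the good slice from $(t,d,1/2)$-concentration, so $\W^{\le d}(f|_{r^*}) > 1/2 + \tau$ for some $\tau > 0$; in the $k$-monotone application of Theorem~\ref{thm:learner}, a constant-factor increase in the implicit constant of $d = O(k\sqrt{n/\log n})$ in Lemma~\ref{lm:main} buys $\tau = \Omega(1)$. Writing $g_{r^*}$ for the exact low-degree projection, Parseval yields $\|f|_{r^*} - g_{r^*}\|_2^2 = 1 - \W^{\le d}(f|_{r^*}) \le 1/2 - \tau$. Since $f \in \{\pm 1\}$, the event $\mathrm{sign}(g_{r^*}(x)) \ne f(x)$ forces $(f(x) - g_{r^*}(x))^2 \ge 1$, hence
\[
\Pr_{x \in \binom{[n]}{r^*}}\!\left[\mathrm{sign}(g_{r^*}(x)) \ne f(x)\right] \;\le\; \|f|_{r^*} - g_{r^*}\|_2^2 \;\le\; \tfrac12 - \tau.
\]
A standard Hoeffding and triangle-inequality argument, estimating each of the $n^{O(d)}$ coefficients to accuracy $n^{-O(d)}$ exactly as in the analysis of Algorithm~\ref{alg:distinguisher}, transfers this bound to the empirical $\tilde g_{r^*}$ with only a $1/\mathrm{poly}(n)$ loss. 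Marginalizing with Proposition~\ref{prop:binomial}, $\Pr[|x| = r^*] = \Omega(n^{-1/2} \cdot 2^{-O(t^2/n)})$, so the overall disagreement probability is at most $1/2 - \Omega(\tau \cdot n^{-1/2} \cdot 2^{-O(t^2/n)})$, which is $1/2 - 1/\mathrm{poly}(n)$ as soon as $t = O(\sqrt{n\log n})$.

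The complexity analysis mirrors that of Algorithm~\ref{alg:distinguisher} almost verbatim: $O(t)$ slices times $n^{O(d)}$ coefficients per slice times $n^{O(d)}$ slice-restricted samples per coefficient, times a $\sqrt{n} \cdot 2^{O(t^2/n)}$ rejection-sampling overhead for the slice indicator, yielding total sample and time cost $2^{O(d\log n + t^2/n)}$. The main obstacle is analytic rather than algorithmic: arguing that the per-slice advantage $\tau$ survives marginalization over all $n+1$ Hamming weights. This is precisely what fixes the threshold $t = O(\sqrt{n\log n})$ in the statement, since for larger $t$ the middle-slice mass $2^{-O(t^2/n)}$ drops to subpolynomial and the weak-learning guarantee would collapse.
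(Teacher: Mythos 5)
Your algorithm is structurally parallel to the paper's: estimate the low‑degree Young--Fourier coefficients of $f|_r$ on each middle slice, assemble the low‑degree polynomial $g$, and convert $g$ to a Boolean hypothesis. The crucial divergence is the rounding rule, and there it fails: you round via plain $\mathrm{sign}(g)$, for which the only guarantee is $\Pr_x[\mathrm{sign}(g(x))\neq f(x)]\le\|f|_r-g\|_2^2 = \W^{>d}(f|_r)$ (plus an $o(1)$ estimation error). For a $(t,d,1/2)$‑concentrated function the definition only gives $\W^{>d}(f|_r)<1/2$, with no quantitative margin, so the per‑slice disagreement probability may be $1/2-o(1)$ and the resulting learning advantage can be sub‑inverse‑polynomial. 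You notice this and write ``$\W^{\le d}(f|_{r^*})>1/2+\tau$ for some $\tau>0$,'' then patch it by inflating the constant in $d$ to force a constant $\tau$; but that proves a different proposition (about $(t,d,\epsilon)$‑concentration with $\epsilon$ a constant strictly below $1/2$), not the one stated.

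The paper sidesteps this with a randomized rounding threshold (Claim~\ref{lem:learn}): pick $\theta$ uniformly in $[-1,1]$ and output $\mathrm{sgn}(g(x)-\theta)$. This yields $\E_\theta\bigl[\Pr_x[f|_r\neq\mathrm{sgn}(g-\theta)]\bigr]\le\|f|_r-g\|_1/2\le\|f|_r-g\|_2/2$. Crucially this bound is \emph{linear}, not quadratic, in the $L_2$ distance: with $\|f|_r-g\|_2\le\sqrt{5/8+o(1)}\le\sqrt{3}/2$ the per‑slice error is at most $\sqrt{3}/4<1/2$, a constant gap that survives the $\Omega(1/\mathrm{poly}(n))$ marginalization over the slice indicator. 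That linear versus quadratic distinction is precisely what makes the $\epsilon=1/2$ threshold in the proposition work, and it is the missing ingredient in your writeup. (Your ``return the $h_r$ with best empirical correlation'' selection in place of the paper's in‑loop $S\ge3/8$ test is a harmless cosmetic variation; the $O(t)$ candidate hypotheses are easily validated on fresh samples.)
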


To learn $(t,d,1/2)$-concentrated functions $f$, Algorithm~\ref{alg:learner} tries to find out the slice ${[n] \choose r}$ on which $f|_r$ is concentrated, and then figures out a function $g:{[n]\choose r}\rightarrow \mathbb{R}$ which is very close to $f|_r$. To convert the approximated function $g$ to a Boolean-valued function, we can utilize Claim~\ref{lem:learn} similar to Exercise 3.34 in \cite{book}.
For the rest of the slices, the learner just outputs the most frequent value. \warn{Since $t=O(\sqrt{n\log n})$, each slice in $[n/2-t/2, n/2+t/2]$ is at least a $\frac{1}{\sqrt{n}}\cdot 2^{-O(t^2/n)}=1/\poly(n)$ fraction according to Proposition \ref{prop:binomial}. Hence we get a $(1/2-1/\poly(n))$-close function $h$.}

\begin{algorithm}[t]
\caption{A weak learner for $(t,d, 1/2)$-concentrated functions}
\label{alg:learner}
  Let $C$ be a large enough constant\;
 \For{$r\leftarrow \frac{n}{2}-\frac{t}{2}$ \KwTo $\frac{n}{2}+\frac{t}{2}$}
 {
     $S\leftarrow 0$\;
     $p\leftarrow 1$\;
     \For{$B\in \mathcal{B}_{n,d'}$ with $d'\leq d$}
     {
        \If{$r\leq n/2$}
        {
            Estimate $\Braket{f|_r, \chi_B}$ with accuracy $n^{-C\cdot d}$\;
            $S\leftarrow S+\widehat{f|_r}(B)^2\| \chi_B\|_2^2$\;
        }\Else
        {
            Estimate $\Braket{f|_r, \chi_{\bar{B}}}$ with accuracy $n^{-C\cdot d}$\;
            $S\leftarrow S+\widehat{f|_r}(B)^2\| \chi_{\bar{B}}\|_2^2$\;
        }
          
     }
     \If{$S\geq 3/8$}
       {
       \If{$r\leq n/2$}{
            $g(x)\leftarrow \sum_{B\in\mathcal{B}_{n,d'}:d'\leq d}\widehat{f|_r}(B)\chi_B(x)$\; 
       }\Else{
            $g(x)\leftarrow \sum_{B\in\mathcal{B}_{n,d'}:d'\leq d}\widehat{f|_r}(B)\chi_{\bar{B}}(x)$\;
       }

        \While{$p>\sqrt{3}/4$}
        {
            Pick $\theta \in [-1,1]$ uniformly at random\;
            Estimate $p\leftarrow\Pr[f|_r\neq\mbox{sgn}(g-\theta)]$\;
        }
        Estimate $\mu_{\neq r}\leftarrow \mathbb{E}_{x}[f(x)\mid |x|\neq r]$\;

        Return $h(x)=\begin{cases}
                         \mbox{sgn}(g(x)-\theta) & \mbox{if } |x|=r;\\
                         \mbox{sgn}(\mu_{\neq r}) & \mbox{if } |x|\neq r.
                     \end{cases}$
        }
  
 }
 Return $h(x)\equiv 0$.
\end{algorithm}

\begin{proof}[Proof of Proposition~\ref{prop:learner}]
We first show that Algorithm \ref{alg:learner} weakly learns $(t,d,1/2)$-concentrated functions. Let $f$ be a $(t,d,1/2)$-concentrated function. For each $n/2-t/2\leq r\leq n/2+t/2$, the variable $S$ in Line 12 is an estimate of $\W^{\leq d}[f|_r]$ with additive error $n^{-\Omega(d)}$. Then for some $r^\star$, the condition $S\geq 3/8$ in Line 12 holds, and Algorithm \ref{alg:learner} executes Lines 13-21. 
For the function $g$ obtained in Line 14 or Line 16, and sufficiently large $n$, 
\begin{align*}
  ||f|_{r^\star}-g||_1\leq ||f|_{r^\star}-g||_2 = \sqrt{\sum_{B\in \mathcal{B}_{n,d'}, d'\leq d}{(\widehat{f|_r}(B)-\hat{g}(B))^2}\|\chi_B\|_2^2+W^{>d}[f|_{r^{\star}}]}\leq \sqrt{o(1)+5/8}\leq \sqrt{3}/2.
\end{align*}
To convert $g$ to a Boolean-valued function, we utilize the following claim. 
\begin{claim}\label{lem:learn}
 Suppose $f:{[n]\choose r}\rightarrow\{-1,1\}$ and $g:{[n]\choose r}\rightarrow \mathbb{R}$. Pick $\theta \in [-1,1]$ uniformly at random and define $g'=\mbox{sgn}\big(g(x)-\theta\big)$, we have $\mathbb{E}_\theta\left[\Pr_x\big(f(x)\neq g'(x)\big)\right]\leq \|f-g\|_1/2$.
\end{claim}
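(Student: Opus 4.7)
The plan is to prove the claim by a standard randomized-rounding calculation, establishing the bound pointwise in $x$ and then integrating.

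First, I would swap the two expectations by Fubini, writing
\[
\mathbb{E}_\theta\bigl[\Pr_x[f(x)\neq g'(x)]\bigr]=\mathbb{E}_x\bigl[\Pr_\theta[f(x)\neq \mathrm{sgn}(g(x)-\theta)]\bigr],
\]
which reduces the statement to showing the pointwise inequality
\[
\Pr_\theta\bigl[f(x)\neq \mathrm{sgn}(g(x)-\theta)\bigr]\;\le\;\tfrac{1}{2}|f(x)-g(x)|
\]
for every fixed $x$. Summing this against the uniform measure on $\binom{[n]}{r}$ immediately yields $\tfrac{1}{2}\|f-g\|_1$ on the right-hand side, as desired.

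Next I would verify the pointwise inequality by case analysis on the sign of $f(x)\in\{-1,1\}$. Suppose $f(x)=1$. Then $f(x)\neq \mathrm{sgn}(g(x)-\theta)$ iff $\theta>g(x)$ (the event $\theta=g(x)$ has measure zero), so the probability equals $\tfrac{1-g(x)}{2}$ when $g(x)\in[-1,1]$, equals $0$ when $g(x)>1$, and equals $1$ when $g(x)<-1$. In each sub-case one checks that this is at most $\tfrac{|1-g(x)|}{2}=\tfrac{|f(x)-g(x)|}{2}$: equality in the first sub-case, trivially in the second (RHS nonneg), and in the third because $|1-g(x)|>2$. The case $f(x)=-1$ is symmetric, with the roles of the two tails of $[-1,1]$ exchanged.

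I do not anticipate a real obstacle here; the only mild subtlety is remembering that $g$ takes values in $\mathbb{R}$ rather than $[-1,1]$, so one must handle the out-of-range cases separately to keep the bound correct without any additional hypothesis on $g$. Everything else is bookkeeping, and the claim follows by combining the pointwise bound with Fubini.
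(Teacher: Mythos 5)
Your proposal is correct and follows essentially the same route as the paper: swap the expectations over $\theta$ and $x$, then bound the inner probability pointwise by $\tfrac{1}{2}|f(x)-g(x)|$. The paper states that pointwise bound without elaboration, whereas you spell out the case analysis (including the out-of-range cases $|g(x)|>1$) that justifies it; this is a welcome level of detail but not a different argument.
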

\begin{proof} By rewriting the last formula and swapping the expectation operators, we have
\begin{equation*}
     \begin{aligned}
        \E_\theta \left[\Pr_x\big(f(x)\neq g'(x)\big)\right] &= \E_\theta \E_x \left[1_{f(x) \neq g'(x)}\right] = \E_x\E_\theta \left[1_{f(x) \neq g'(x)}\right]
        = \E_x\left[\Pr_\theta \big(f(x) \neq \mbox{sgn}(g(x)-\theta)\big)\right]\\
          &\leq\E_x\left[\frac{|f(x)-g(x)|}{2}\right]=\frac{\|f-g\|_1}{2}.\qedhere
    \end{aligned}
\end{equation*}
\end{proof}
Thus, for a random $\theta\in[-1,1]$, $\Pr\big[f|_r(x)\neq \mbox{sgn}(g(x)-\theta)\big] \leq \sqrt{3}/4$ holds with a constant probability. That is, with high probability, the loop of Lines 17-19 is repeated a constant number of times, and we will get a $\theta^\star$ such that $\Pr\big[f|_r(x)\neq \mbox{sgn}(g(x)-\theta^\star)\big] \leq \sqrt{3}/4$. Finally, we have
\begin{align*}
\Pr[h(x)\neq f(x)]=&\Pr[|x|\neq r]\Pr[f(x)\neq \mbox{sgn}(\mu_{\neq r})\mid |x|\neq r]+\Pr[|x|=r]\Pr\big[f|_r(x)\neq \mbox{sgn}(g(x)-\theta^\star)\big]\\
\leq &\left(1-\frac{{n \choose r}}{2^n}\right)\cdot \frac{1}{2}+ \frac{{n \choose r}}{2^n} \cdot \frac{\sqrt{3}}{4} =\frac{1}{2} + \frac{{n \choose r}}{2^n}\cdot\left(\frac{\sqrt{3}}{4}-\frac{1}{2}\right)\\=&\frac{1}{2}-\left(\frac{1}{2}-\frac{\sqrt{3}}{4}\right)\cdot\frac{1}{\sqrt{n}}\cdot 2^{-O(t^2/n)}=\frac{1}{2} - \frac{1}{\mbox{poly}(n)},
\end{align*}
\warn{where the second last equality is according to Proposition \ref{prop:binomial} and the last equality is due to that $t=O(\sqrt{n\log n})$.}

What remains is to show that Algorithm \ref{alg:learner} terminates in $2^{O(d\log n + t^2/n)}$ time. First, as shown in the analysis of Algorithm \ref{alg:distinguisher}, for each $n/2-t/2\leq r\leq n/2+t/2$, it costs $2^{O(d\log n + t^2/n)}$ time to execute Lines 5-11. For $r^\star$, Algorithm \ref{alg:learner} would execute Lines 13-21. As shown above, the loop of Lines 17-19 is  repeated a constant of times. So, it costs $2^{O(d\log n + t^2/n)}$ time to execute Lines 13-21. Therefore the total time complexity is $t\cdot 2^{O(d\log n + t^2/n)}=2^{O(d\log n + t^2/n)}$.
\end{proof}

\section{Discussion and Open Problems}\label{sec:discussion}

\noindent\textbf{Fourier analysis on slices.}\quad It is surprising to us that a simple variant of the ``Low-Degree Algorithm'' on slices can outperform the classic ``Low-Degree Algorithm'' in terms of attacking negation-limited weak PRFs. To the best of our knowledge, unlike Fourier analysis on the Boolean cube, Fourier analysis on slices has not been explored in cryptography. It is an extremely interesting direction to use this technique to attack more cryptographic constructions, particularly ones which are secure against attacks based on standard Fourier analysis. 

\vspace{2ex}
\noindent\textbf{The hardness of $1$ negation weak PRFs.} One of the most intriguing open problems is how hard can $1$ negation weak PRFs be? Our bound suggests that, unlike testing $1$ negation (using a 1-sided non-adaptive tester)~\cite{GKW19} and learning $1$ negation to high accuracy~\cite{BCOST15}, distinguishing $1$ negation is significantly more efficient than $2^{O(\sqrt{n})}$. Can we have polynomial time distinguishers? We believe that new structural results of $2$-monotone functions are required for polynomial time distinguishers.  

\vspace{2ex}
\noindent\textbf{Fourier spectrum of $k$-monotone functions on low levels.} It is a basic fact \cite{book} that every monotone function $f:\{0,1\}^n\rightarrow\{0,1\}$ has a large Fourier coefficient on the first two levels. Does a similar statement hold for $k$-monotone functions? In particular, we are curious about the following conjecture.
\begin{conjecture}\label{conj:fourier_spectrum}
Let $f:\{0,1\}^n\rightarrow \{0,1\}$ be a $k$-monotone function. There exists a set $S\subseteq[n]$ of size at most $k$ such that $|\hat{f}(S)|=1/\mathrm{poly}(n)$.
\end{conjecture}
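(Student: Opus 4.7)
The plan is to proceed by induction on $k$, exploiting the parity decomposition $f=g_1\oplus\cdots\oplus g_k$, which in the $\{\pm 1\}$ representation becomes the pointwise product $f=g_1g_2\cdots g_k$ with each $g_i$ monotone. The base case $k=1$ is the classical fact that every monotone $g:\{0,1\}^n\to\{\pm 1\}$ has a large Fourier coefficient on levels $0$ or $1$: if $|\E[g]|\geq 1/2$ take $S=\emptyset$; otherwise $\mathrm{Var}[g]=\Omega(1)$, the Poincar\'e inequality gives $\I[g]=\Omega(1)$, and the monotone identity $|\hat g(\{i\})|=\mathrm{Inf}_i[g]$ together with an averaging argument over $i\in[n]$ yields some $i$ with $|\hat g(\{i\})|\geq\Omega(1/n)$.

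For the inductive step, write $f=g_1\cdot f'$ where $f'$ is $(k-1)$-monotone. The induction hypothesis produces a set $S'$ of size at most $k-1$ with $|\hat{f'}(S')|\geq 1/\poly(n)$, and the monotone base case a set $T_0$ of size at most $1$ with $|\hat{g_1}(T_0)|\geq 1/\poly(n)$. The natural candidate is $S=S'\,\triangle\, T_0$, which has size at most $k$, and the Fourier convolution identity
\[
\hat f(S)=\widehat{g_1 f'}(S)=\sum_{T}\hat{g_1}(T)\,\hat{f'}(S\,\triangle\, T)=\hat{g_1}(T_0)\hat{f'}(S')+\sum_{T\neq T_0}\hat{g_1}(T)\hat{f'}(S\,\triangle\, T)
\]
isolates a diagonal contribution of magnitude $\geq 1/\poly(n)$ from an error term.

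The hard part is controlling the off-diagonal sum: Cauchy--Schwarz only bounds it by $\|g_1\|_2\,\|f'\|_2=1$, and there is no a priori reason the coefficient-maximizing choices $T_0$ and $S'$ should avoid catastrophic cancellation when convolved. This appears to be the essential obstruction, and the fact that the slice concentration results of this paper only guarantee low-degree Fourier mass $\Omega(1)$ (which by Cauchy--Schwarz only implies a single coefficient of magnitude $\Omega(\binom{n}{\le k}^{-1/2})=n^{-\Omega(k)}$, too small for $k=\omega(1)$) strongly suggests that resolving the conjecture requires a genuinely new structural statement, namely that the large level-$\leq 1$ coefficient of a monotone function can be chosen so that its ``direction'' is stable under multiplication by other monotone factors. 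A complementary route is to try to lift the slice concentration of Lemma~\ref{lm:main} to the Boolean cube by expanding $\chi_B$ (a degree-$|B|$ polynomial in $x$) in the standard character basis and correcting for the binomial weight $\binom{n}{r}/2^n=\Theta(1/\sqrt n)$, but the same Cauchy--Schwarz loss reappears because the slice basis is not orthogonal under the uniform measure on $\{0,1\}^n$. As a sanity check I would first attempt $k=2$, the single-negation case highlighted in Section~\ref{sec:discussion}: a proof or counter-example there would already pin down the correct form of the conjecture and reveal which invariant, if any, carries the induction.
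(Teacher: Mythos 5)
The statement you were asked about is labeled a \emph{Conjecture} in the paper, and the discussion in Section~\ref{sec:discussion} makes clear that it is open: the authors report that their first attempt was to prove it, that they could not, and that even the much weaker Conjecture~\ref{conj:fourier_spectrum2} (a single nonzero coefficient of sublinear-in-$\sqrt n$ degree for $k=2$) remains unresolved. There is therefore no proof in the paper against which your attempt can be checked. What you submitted is, appropriately, not a proof but an analysis of why the obvious inductive approach fails, and that analysis is essentially sound: the base case $k=1$ is indeed the classical fact that a monotone function has an inverse-polynomial Fourier coefficient on level $0$ or $1$ (using $\hat g(\{i\})=\mathrm{Inf}_i[g]$ and averaging), and for the inductive step the convolution identity $\widehat{g_1 f'}(S)=\sum_T \hat{g_1}(T)\hat{f'}(S\triangle T)$ immediately runs into the uncontrolled off-diagonal sum, for which Cauchy--Schwarz loses everything and no structural tool in the paper helps.

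One small imprecision worth flagging: you write that the paper's slice concentration places $\Omega(1)$ Fourier mass at low degree and hence a pigeonhole argument yields a coefficient of size $\binom{n}{\le k}^{-1/2}=n^{-\Omega(k)}$. The paper's concentration (Lemma~\ref{lm:main}) actually lives at degree $d=\Theta\bigl((k^2 n/\log n)^{1/3}\bigr)$ or $d=\Theta\bigl(k\sqrt{n/\log n}\bigr)$, not at degree $k$, so the pigeonhole bound is in fact $n^{-\Omega(d)}$, which is superpolynomially small for \emph{every} $k$, including $k=O(1)$. This only strengthens your conclusion: the slice-concentration route cannot possibly deliver the conjecture without a qualitatively stronger concentration (at degree $O(k)$), and it also has to contend with the non-orthogonality of the slice basis under the uniform cube measure, as you note. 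Your diagnosis -- that some new structural statement, such as a ``stable direction'' for the large low-degree coefficient under multiplication by monotone factors, would be needed -- is consistent with the paper's own assessment that this is genuinely open.
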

\warn{Conjecture \ref{conj:fourier_spectrum} immediately implies an efficient weak learner (and an efficient distinguisher) for $k$-monotone functions. In fact, our first attempt to distinguish $k$-monotone is to prove Conjecture \ref{conj:fourier_spectrum}. So far, even the following much weaker conjecture remains open. 
\begin{conjecture}\label{conj:fourier_spectrum2}
	Let $f:\{0,1\}^n\rightarrow \{0,1\}$ be a $2$-monotone function. There exists a set $S\subseteq[n]$ of size $o(\sqrt{n})$ such that $|\hat{f}(S)|>0$.
\end{conjecture}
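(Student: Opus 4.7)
The plan is to prove Conjecture \ref{conj:fourier_spectrum2} by contradiction: assume $\hat{f}(S)=0$ for every $|S|<k$ with $k=o(\sqrt{n})$, and derive a contradiction from the decomposition $f=g_1\oplus g_2$ with $g_1,g_2$ monotone. The first step is the elementary energy argument: since $W^{<k}[f]=0$, we have $I[f]=\sum_S|S|\hat{f}(S)^2\geq k$, and since flipping a bit changes $f$ only if it changes exactly one of $g_1,g_2$, influences are subadditive under XOR, giving $I[f]\leq I[g_1]+I[g_2]\leq 2\sqrt{n}$. This forces $k\leq 2\sqrt{n}$, already recovering $|S|=O(\sqrt{n})$ unconditionally. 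The bound is tight at this level of generality since Majority saturates $I[g]=\Theta(\sqrt{n})$, so the real task is to beat $\sqrt{n}$ by an $\omega(1)$ factor.

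My first avenue would leverage the Fourier convolution identity $\hat{f}(S)=\sum_T\hat{g_1}(T)\hat{g_2}(S\triangle T)$ together with monotone sign constraints. In the $\pm 1$ convention with $0\mapsto+1$, every coefficient $\hat{g_i}(\{j\})$ has a definite sign (nonnegative, equal to a scaled influence), and low-level correlations between monotone functions are controlled by Harris--FKG. Under the hypothesis that $\hat{f}(S)=0$ for all small $|S|$, the convolution identity yields many simultaneous linear equations in the $\hat{g_i}(T)$, and combined with sign rigidity on low levels one may try to show that $g_1$ and $g_2$ either nearly coincide (making $f$ nearly constant and so $\hat{f}(\emptyset)\neq 0$) or nearly complement each other on a dominant slice (again producing a low-degree nonzero coefficient). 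Splitting cleanly into these two regimes, and quantifying ``nearly'' well enough to extract a nonzero cube coefficient at level $o(\sqrt{n})$, is the main combinatorial problem on this route.

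A complementary route is to use the slice-based machinery of this paper. Specializing Proposition~\ref{prop:main} to $k=2$ gives $\sum_r I[f|_r]\leq 2n$, so for a constant fraction of middle slices $r$ we have $I[f|_r]=O(1)$; Theorem~\ref{slice_inf} then forces $\widehat{f|_r}$ to be concentrated at slice-degree $O(1)$. The hope is to lift a nonzero slice-Fourier coefficient of constant slice-degree on some middle slice to a nonzero cube-Fourier coefficient $\hat{f}(S)$ with $|S|=o(\sqrt{n})$, using the expansion of the cube parity $\chi_S|_r$ in the slice basis $\{\chi_B\}$ of degree $\leq|S|$.

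The main obstacle is precisely this lifting: as a cube function the slice basis element $\chi_B$ is a signed combination of cube parities of many different support sizes, so exact cancellations across those levels could in principle annihilate every small-$|S|$ cube coefficient while preserving a given slice coefficient. Ruling out such cancellations appears to require a genuinely new monotone-specific anti-concentration or rigidity statement that goes beyond the influence bound and the standard Fourier inequalities used in Section~3, and I expect this step to be the substantive hard part — consistent with the authors flagging the conjecture as open even for $k=2$.
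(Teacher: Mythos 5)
There is no proof in the paper to compare against: Conjecture~\ref{conj:fourier_spectrum2} is explicitly left \emph{open} by the authors (it is flagged as a weaker version of Conjecture~\ref{conj:fourier_spectrum} that ``remains open'' even for $k=2$), and your proposal does not close it either. What you actually establish is only the easy part: from the contradiction hypothesis $\hat{f}(S)=0$ for all $|S|<k$ you correctly get $\I[f]\geq k$ by the spectral formula for total influence, and the subadditivity of influences under XOR together with the $O(\sqrt{n})$ influence bound for monotone functions gives $k=O(\sqrt{n})$. This recovers the trivial threshold but, as you yourself note, the conjecture is precisely the assertion that one can beat $\sqrt{n}$ by an $\omega(1)$ factor, and neither of your two proposed routes delivers that.

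Concretely, both routes stop exactly where the difficulty begins. In the convolution route, the identity $\hat{f}(S)=\sum_T\hat{g_1}(T)\hat{g_2}(S\triangle T)$ plus sign information on level-one coefficients does not by itself force the dichotomy you describe (``$g_1\approx g_2$'' versus ``$g_1\approx\neg g_2$ on a dominant slice''); you give no mechanism for extracting a nonzero coefficient at level $o(\sqrt{n})$ from the vanishing of all smaller ones, and Majority-type examples show that monotone functions can have all their low-level structure spread thinly, so sign rigidity alone is far from enough. In the slice route, Proposition~\ref{prop:main} and Theorem~\ref{slice_inf} do give middle slices with slice-degree-$O(1)$ concentration, but the lifting step is exactly the unresolved obstruction you name: a slice basis element $\chi_B$, viewed on the cube, mixes cube parities across many levels, and nothing you propose rules out the cancellations that could kill every cube coefficient of size $o(\sqrt{n})$ while leaving the slice coefficients intact. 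Since the substantive step in each route is left as a hope rather than an argument, the proposal is a research plan consistent with the paper's discussion, not a proof of the statement.
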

}

\vspace{4ex}
\noindent\textbf{Acknowledgments.} We sincerely thank the anonymous reviewers for their detailed and constructive comments. We thank Shengyu Zhang for fruitful discussions at the early stage of this work. Siyao Guo would like to thank Igor Carboni Oliveira for telling her Conjectures~\ref{conj:fourier_spectrum} and Conjecture~\ref{conj:fourier_spectrum2}.

\newpage

\bibliographystyle{alpha}
\bibliography{paper}

\newcommand{\etalchar}[1]{$^{#1}$}
\begin{thebibliography}{DSFT{\etalchar{+}}15}

\bibitem[AB87]{alon1987monotone}
Noga Alon and Ravi~B Boppana.
\newblock The monotone circuit complexity of boolean functions.
\newblock {\em Combinatorica}, 7(1):1--22, 1987.

\bibitem[ABG{\etalchar{+}}14]{ABGKM14}
Adi Akavia, Andrej Bogdanov, Siyao Guo, Akshay Kamath, and Alon Rosen.
\newblock Candidate weak pseudorandom functions in {$AC^0\circ MOD_2$}.
\newblock In {\em Proceedings of the 5th conference on Innovations in
  theoretical computer science}, pages 251--260. ACM, 2014.

\bibitem[And85]{andreev1985method}
Alexander~E Andreev.
\newblock On a method for obtaining lower bounds for the complexity of
  individual monotone functions.
\newblock {\em Doklady Akademii Nauk SSSR}, 282:1033--1037, 1985.

\bibitem[BBL98]{BABL98}
Avrim Blum, Carl Burch, and John Langford.
\newblock On learning monotone boolean functions.
\newblock In {\em Foundations of Computer Science, 1998. Proceedings. 39th
  Annual Symposium on}, pages 408--415. IEEE, 1998.

\bibitem[BCO{\etalchar{+}}15]{BCOST15}
Eric Blais, Cl{\'{e}}ment~L. Canonne, Igor~Carboni Oliveira, Rocco~A. Servedio,
  and Li{-}Yang Tan.
\newblock Learning circuits with few negations.
\newblock In {\em Approximation, Randomization, and Combinatorial Optimization.
  Algorithms and Techniques, {APPROX/RANDOM} 2015, August 24-26, 2015,
  Princeton, NJ, {USA}}, pages 512--527, 2015.

\bibitem[BR17]{BR17}
Andrej Bogdanov and Alon Rosen.
\newblock Pseudorandom functions: Three decades later.
\newblock In {\em Tutorials on the Foundations of Cryptography.}, pages
  79--158. 2017.

\bibitem[BU99]{berg1999symmetric}
Christer Berg and Staffan Ulfberg.
\newblock Symmetric approximation arguments for monotone lower bounds without
  sunflowers.
\newblock {\em Computational Complexity}, 8(1):1--20, 1999.

\bibitem[CGG{\etalchar{+}}17]{CGGKW17}
Cl{\'{e}}ment~L. Canonne, Elena Grigorescu, Siyao Guo, Akash Kumar, and Karl
  Wimmer.
\newblock Testing $k$-monotonicity.
\newblock In {\em 8th Innovations in Theoretical Computer Science Conference,
  {ITCS} 2017, January 9-11, 2017, Berkeley, CA, {USA}}, pages 29:1--29:21,
  2017.

\bibitem[DSFT{\etalchar{+}}15]{DFTWW14}
Dana Dachman-Soled, Vitaly Feldman, Li-Yang Tan, Andrew Wan, and Karl Wimmer.
\newblock Approximate resilience, monotonicity, and the complexity of agnostic
  learning.
\newblock In {\em Proceedings of the twenty-sixth annual ACM-SIAM symposium on
  Discrete algorithms}, pages 498--511. Society for Industrial and Applied
  Mathematics, 2015.

\bibitem[Fil16]{Filmus}
Yuval Filmus.
\newblock An orthogonal basis for functions over a slice of the boolean
  hypercube.
\newblock {\em Electr. J. Comb.}, 23(1):P1.23, 2016.

\bibitem[GGM86]{GGM}
Oded Goldreich, Shafi Goldwasser, and Silvio Micali.
\newblock How to construct random functions.
\newblock {\em Journal of the ACM}, 33(4):792--807, 1986.

\bibitem[GK17]{GK15}
Siyao Guo and Ilan Komargodski.
\newblock Negation-limited formulas.
\newblock {\em Theoretical Computer Science}, 660:75--85, 2017.

\bibitem[GKW19]{GKW19}
Elena Grigorescu, Akash Kumar, and Karl Wimmer.
\newblock Flipping out with many flips: Hardness of testing {$k$}-monotonicity.
\newblock {\em {SIAM} J. Discret. Math.}, 33(4):2111--2125, 2019.

\bibitem[GMOR15]{GMOR15}
Siyao Guo, Tal Malkin, Igor~C Oliveira, and Alon Rosen.
\newblock The power of negations in cryptography.
\newblock In {\em Theory of Cryptography Conference}, pages 36--65. Springer,
  2015.

\bibitem[HR00]{harnik2000higher}
Danny Harnik and Ran Raz.
\newblock Higher lower bounds on monotone size.
\newblock In {\em Proceedings of the thirty-second annual ACM symposium on
  Theory of computing}, pages 378--387. ACM, 2000.

\bibitem[LMN93]{LMN93}
Nathan Linial, Yishay Mansour, and Noam Nisan.
\newblock Constant depth circuits, {F}ourier transform, and learnability.
\newblock {\em Journal of the ACM (JACM)}, 40(3):607--620, 1993.

\bibitem[LZ17]{LZ17}
Chengyu Lin and Shengyu Zhang.
\newblock Sensitivity conjecture and log-rank conjecture for functions with
  small alternating numbers.
\newblock In {\em 44th International Colloquium on Automata, Languages, and
  Programming, {ICALP} 2017, July 10-14, 2017, Warsaw, Poland}, pages
  51:1--51:13, 2017.

\bibitem[Mar58]{Mark}
Andrey~A Markov.
\newblock On the inversion complexity of a system of functions.
\newblock {\em Journal of the ACM (JACM)}, 5(4):331--334, 1958.

\bibitem[O'D14]{book}
Ryan O'Donnell.
\newblock {\em Analysis of Boolean Functions}.
\newblock Cambridge University Press, 2014.

\bibitem[OW09]{kkl09}
Ryan O'Donnell and Karl Wimmer.
\newblock {KKL, Kruskal-Katona, and monotone nets}.
\newblock In {\em 50th Annual {IEEE} Symposium on Foundations of Computer
  Science, {FOCS} 2009, October 25-27, 2009, Atlanta, Georgia, {USA}}, pages
  725--734, 2009.

\bibitem[Raz85]{razborov1985lower}
Alexander~A Razborov.
\newblock Lower bounds for the monotone complexity of some boolean functions.
\newblock In {\em Soviet Math. Dokl.}, volume~31, pages 354--357, 1985.

\bibitem[Ros15]{Ros15}
Benjamin Rossman.
\newblock Correlation bounds against monotone {$NC^1$}.
\newblock In {\em LIPIcs-Leibniz International Proceedings in Informatics},
  volume~33. Schloss Dagstuhl-Leibniz-Zentrum fuer Informatik, 2015.

\bibitem[Sri11]{srinivasan2011symmetric}
Murali~K Srinivasan.
\newblock {Symmetric chains, Gelfand--Tsetlin chains, and the Terwilliger
  algebra of the binary Hamming scheme}.
\newblock {\em Journal of Algebraic Combinatorics}, 34(2):301--322, 2011.

\bibitem[Tar88]{tardos1988gap}
{\'E}va Tardos.
\newblock The gap between monotone and non-monotone circuit complexity is
  exponential.
\newblock {\em Combinatorica}, 8(1):141--142, 1988.

\end{thebibliography}

\end{document}